\DeclareTextSymbolDefault{\textquotedbl}{T1}
\providecommand{\tabularnewline}{\\}
\providecommand{\algorithmname}{Algorithm}
\newcommand{\lyxaddress}[1]{
	\par {\raggedright #1
	\vspace{1.4em}
	\noindent\par}
}
\theoremstyle{plain}
\newtheorem{thm}{\protect\theoremname}
\theoremstyle{plain}
\newtheorem{lem}[thm]{\protect\lemmaname}
\newenvironment{proof}[1][\protect\proofname]{\par
	\normalfont\topsep6\p@\@plus6\p@\relax
	\trivlist
	\itemindent\parindent
	\item[\hskip\labelsep\scshape #1]\ignorespaces
}{%
	\endtrivlist\@endpefalse
}
\providecommand{\proofname}{Proof}
\mathchardef\mhyphen="2D
\providecommand{\lemmaname}{Lemma}
\providecommand{\theoremname}{Theorem}
\begin{document}
\title{Optimal Task Scheduling Benefits From a Duplicate-Free State-Space}
\author{Michael Orr \& Oliver Sinnen}
\maketitle

\lyxaddress{\begin{center}
Department of Electrical and Computer Engineering, University of
Auckland, New Zealand
\par\end{center}}
\begin{abstract}
The NP-hard problem of task scheduling with communication delays ($P|prec,c_{ij}|C_{\mathrm{max}}$)
is often tackled using approximate methods, but guarantees on the
quality of these heuristic solutions are hard to come by. Optimal
schedules are therefore invaluable for properly evaluating these heuristics,
as well as being very useful for applications in time critical systems.
Optimal solving using branch-and-bound algorithms like A{*} has been
shown to be promising in the past, with a state-space model we refer
to as exhaustive list scheduling (ELS). The obvious weakness of this
model is that it leads to the production of large numbers of duplicate
states during a search, requiring special techniques to mitigate this
which cost additional time and memory. In this paper we define a new
state-space model (AO) in which we divide the problem into two distinct
sub-problems: first we decide the allocations of all tasks to processors,
and then we order the tasks on their allocated processors in order
to produce a complete schedule. This two-phase state-space model offers
no potential for the production of duplicates. We also describe how
the pruning techniques and optimisations developed for the ELS model
were adapted or made obsolete by the AO model. An experimental evaluation
shows that the use of this new state-space model leads to a significant
increase in the number of task graphs able to be scheduled within
a feasible time-frame, particularly for task graphs with a high communication-to-computation
ratio. Finally, some advanced lower bound heuristics are proposed
for the AO model, and evaluation demonstrates that significant gains
can be achieved from the consideration of necessary idle time. 
\end{abstract}

\section{Introduction}

In order to use the full potential of a multiprocessor system in speeding
up task execution, efficient schedules are required. In this work,
we address the classic problem of task scheduling with communication
delays, known as $P|prec,c_{ij}|C_{\mathrm{max}}$ using the $\alpha|\beta|\gamma$
notation \citep{Veltman1990:msc}. The problem involves a set of tasks,
with associated precedence constraints and communication delays, which
must be scheduled such that the overall finish time (schedule length)
is minimised. The optimal solving of this problem is well known to
be NP-hard \citep{Sar1989p}, so that the amount of work required
grows exponentially as the number of tasks is increased. For this
reason, many heuristic approaches have been developed, trading solution
quality for reduced computation time \citep{HagJan05AHP,YanGer1993LSW,HWACHO89Sch,Sin07TSP}.
Unfortunately, the relative quality of these approximate solutions
cannot be guaranteed, as no $\alpha$-approximation scheme for the
problem is known \citep{DM:2009:SPP}.

Although the NP-hardness of the problem usually discourages optimal
solving, an optimal schedule can give a significant advantage in time
critical systems or applications where a single schedule is reused
many times. Optimal solutions are also necessary in order to evaluate
the effectiveness of a heuristic scheduling method. Branch-and-bound
algorithms have previously shown promise in efficiently finding optimal
solutions to this problem \citep{ShaSinAst2010}, but the state-space
model used, exhaustive list scheduling (ELS), was prone to the production
of duplicate states.

This paper presents a new state-space model in which the task scheduling
problem is tackled in two distinct phases: first allocation, and then
ordering. The two-phase state-space model (abbreviated AO) does not
allow for the possibility of duplicate states. We give a detailed
explanation of the algorithms and heuristics used to explore the AO
state space and discuss its theoretical benefits over ELS, as well
as its limitations. We describe the algorithms necessary to guarantee
that all valid states can be produced, and that invalid states cannot.
Previous research demonstrated that the ELS model benefited immensely
from the application of pruning techniques such as processor normalisation
and identical task pruning. The benefit is such that effective pruning
would seem to be a necessity in order for another model to be competitive.
We therefore describe these techniques and explain the changes needed
to adapt them from the ELS model to the AO model. An experimental
evaluation is used to compare the performance of the two state-space
models. We then propose further advances to lower bound heuristics
for AO, and evaluate these. This paper expands on preliminary work,
which featured a promising evaluation using only small task graphs
\citep{Orr2015:dfs}.

In Section \ref{sec:background}, background information is given,
including an explanation of the task scheduling model, an overview
of branch-and-bound algorithms, and a description of the ELS model.
Section \ref{sec:theory} describes the new AO model, and how a branch-and-bound
search is conducted through it. Section \ref{sec:Avoiding-Invalid-Stat}
proposes a technique to avoid searching invalid states in the ordering
phase. Section \ref{sec:pruning} explains the pruning techniques
used with ELS, and their adaptation to the AO model. Section \ref{sec:evaluation}
explains how the new model was evaluated by comparison with the old
one, and presents the results. Section \ref{sec:Lower-Bound-Heuristics}
describes novel and more complex lower bound heuristics for the AO
model, and an evaluation of their effect. Finally, Section~\ref{sec:conclusion}
gives the conclusions of the paper and outlines possible further avenues
of study.

\section{Background}

\label{sec:background}

\subsection{Task Scheduling Model}

The specific problem that we address here is the scheduling of a task
graph $G=\{V,E,w,c\}$ on a set of processors $P$. $G$ is a directed
acyclic graph wherein each node $n\in V$ represents a task, and each
edge $e_{ij}\in E$ represents a required communication from task
$n_{i}$ to task $n_{j}$. Figure \ref{fig:A-simple-task} shows an
example of a task graph. The computation cost of a task $n\in V$
is given by its positive weight $w(n)$, and the communication cost
of an edge $e_{ij}\in E$ is given by the non-negative weight $c(e_{ij})$.
The target parallel system for our schedule consists of a finite number
of homogeneous processors, represented by $P$. Each processor is
dedicated, meaning that no executing task may be preempted. We assume
a fully connected communication subsystem, such that each pair of
processors $p_{i},p_{j}\in P$ is connected by an identical communication
link. Communications are performed concurrently and without contention.
Local communication (from $p_{i}$ to $p_{i}$) is assumed to take
place in the memory shared by the tasks on the same processor and
is therefore assumed to have zero cost.

\begin{figure}
\begin{centering}
\includegraphics[width=0.3\columnwidth]{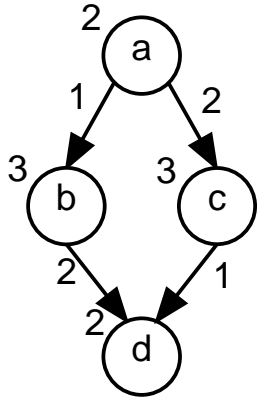}
\par\end{centering}
\caption{\label{fig:A-simple-task}A simple task graph.}
\end{figure}
Our aim is to produce a schedule $S=\{proc,t_{\mathrm{s}}\}$, where
$proc(n)$ allocates the task to a processor in $P$, and $t_{\mathrm{s}}(n)$
assigns it a start time on this processor. For a schedule to be valid,
it must fulfill two conditions for all tasks in $G$. The Processor
Constraint requires that only one task is executed by a processor
at any one time. The Precedence Constraint requires that a task $n$
may only be executed once all of its predecessors have finished execution,
and all required data has been communicated to $proc(n)$. Figure
\ref{fig:A-valid-schedule} illustrates a valid schedule for the task
graph in Figure \ref{fig:A-simple-task}. The goal of optimal task
scheduling is to find such a schedule $S$ for which the total execution
time or schedule length $sl(S)$ is the lowest possible.

\begin{figure}
\begin{centering}
\includegraphics[width=0.3\columnwidth]{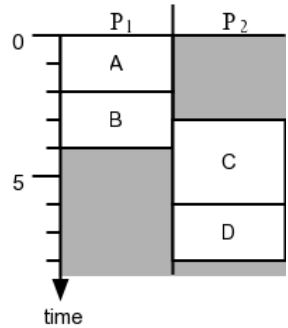}
\par\end{centering}
\caption{A valid schedule for the simple task graph of Fig.\label{fig:A-valid-schedule}.}
\end{figure}
It is useful to define the concept of node levels for a task graph
\citep{Sin07TSP}. For a task $n$, the top level $tl(n)$ is the
length of the longest path in the task graph that ends with $n$.
This does not include the weight of $n$, or any communication costs.
Similarly, the bottom level $bl(n)$ is the length of the longest
path beginning with $n$, excluding communication costs. The weight
of $n$ is included in $bl(n)$. The allocated top and bottom levels
$tl_{\alpha}(n)$ and $bl_{\alpha}(n)$ incorporate communication
costs, once the allocation of a parent and child task to different
processors confirms that the edge between them will be incurred.

Task graphs can be categorised into a number of different structures,
which often have distinct properties and associated difficulties when
solving. Figure \ref{fig:Simple-graphs} shows examples of some of
the simplest structures.

\begin{figure}
\hfill{}\subfloat[\label{fig:Fork-graph}Fork]{\begin{centering}
\includegraphics[width=4.5cm]{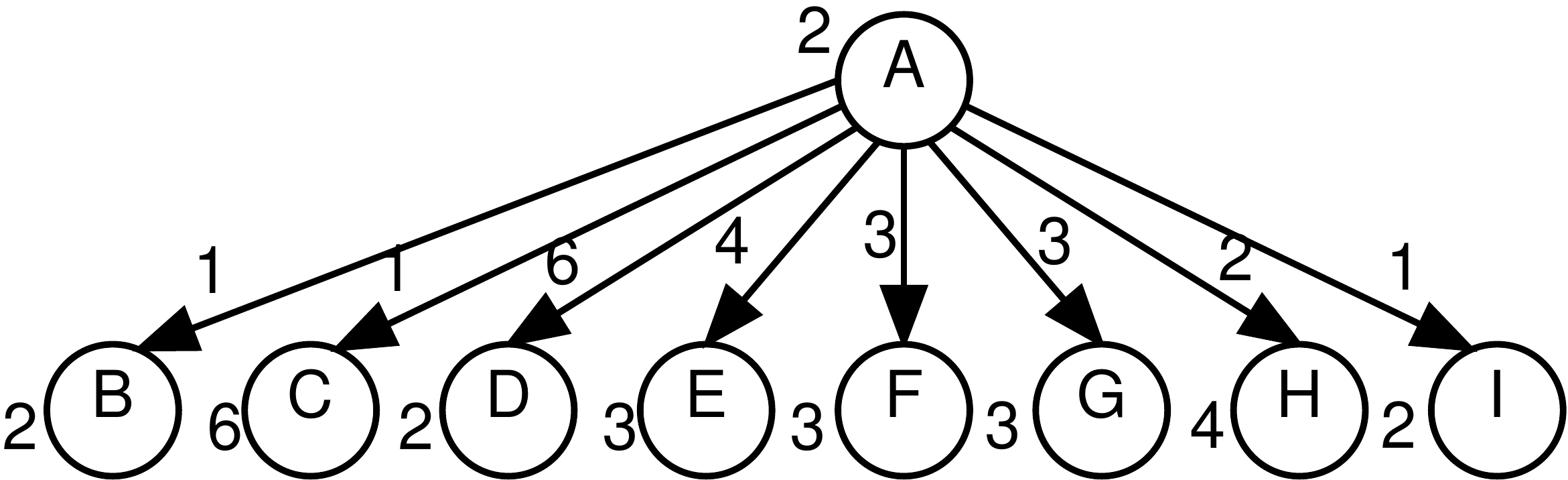}
\par\end{centering}
}\hfill{}\subfloat[\label{fig:Join-graph}Join]{\begin{centering}
\includegraphics[width=4cm]{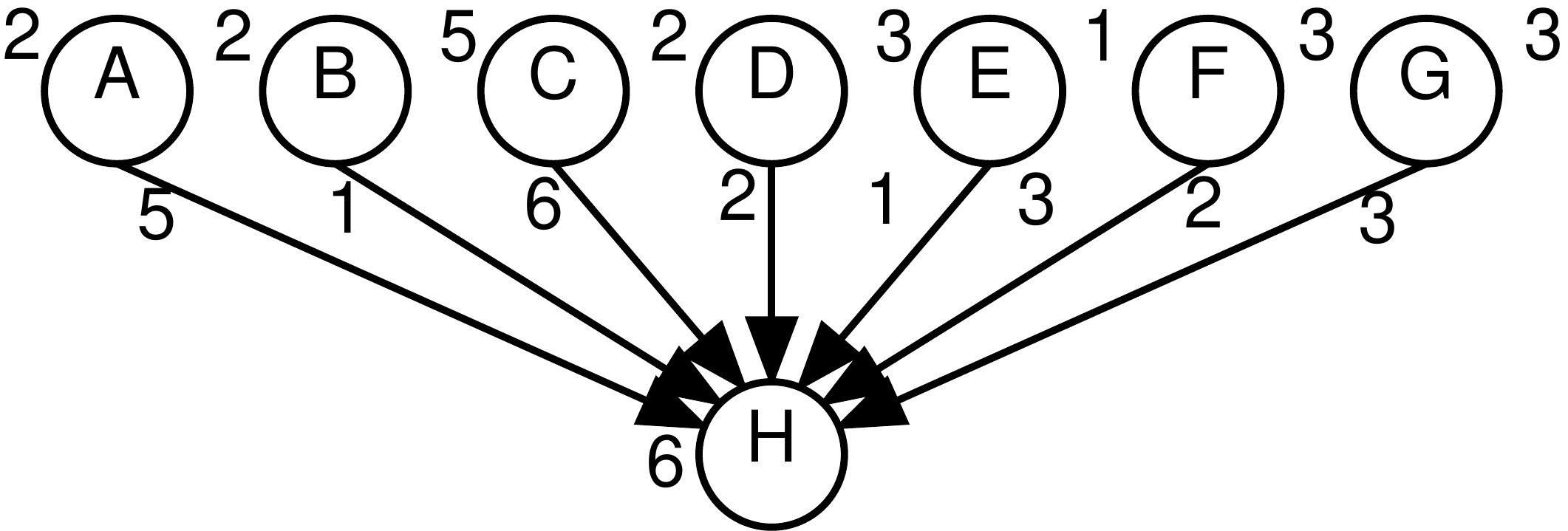}
\par\end{centering}
}\hfill{}\subfloat[\label{fig:Fork-join}Fork-join]{\begin{centering}
\includegraphics[width=2cm]{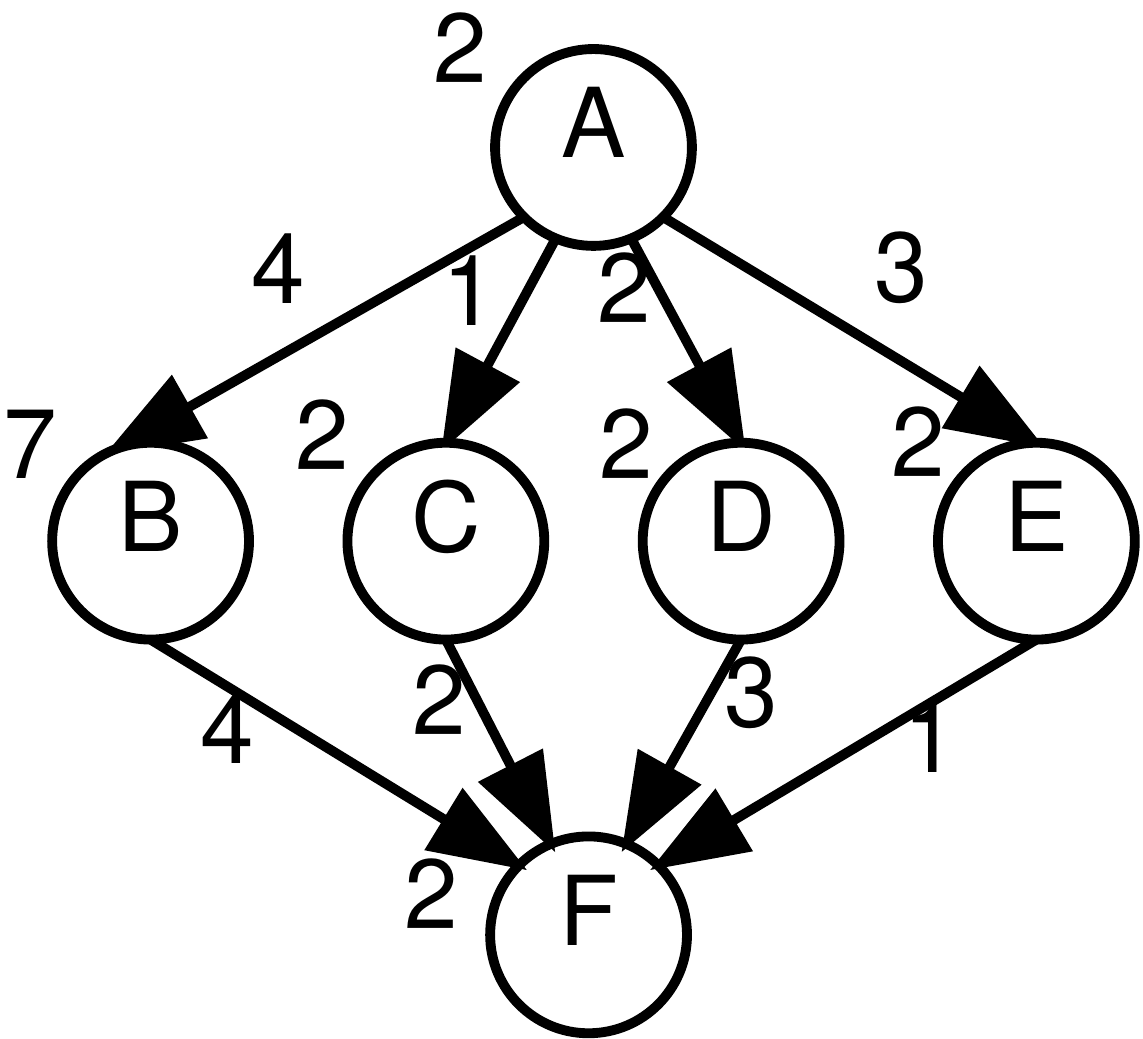}
\par\end{centering}
}\hfill{}

\caption{\label{fig:Simple-graphs}Example of simple task graph structures}
\end{figure}

\subsection{Branch-and-Bound}

The term branch-and-bound refers to a family of search algorithms
which are widely used for the solving of combinatorial optimisation
problems. They do this by implicitly enumerating all solutions to
a problem, simultaneously finding an optimal solution and proving
its optimality \citep{BundyWallen1984}. A search tree is constructed
in which each node (usually referred to as a state) represents a partial
solution to the problem. From the partial solution represented by
a state $s$, some set of operations is applied to produce new partial
solutions which are closer to a complete solution. In this way we
define the children of $s$, and thereby \textit{branch}. Each state
must also be \textit{bounded}: we evaluate each state $s$ using a
cost function $f$, such that $f(s)$ is a lower bound on the cost
of any solution that can be reached from $s$. Using these bounds,
we can guide our search away from unpromising partial solutions and
therefore remove large subsets of the potential solutions from the
need to be fully examined.

A{*} is a particularly popular variant of branch-and-bound which uses
a best-first search approach \citep{AICPub834:1968}. A{*} has the
interesting property that it is optimally efficient; using the same
cost function $f$, no search algorithm could find an optimal solution
while examining fewer states. To achieve this property, it is necessary
that the cost function $f$ provides an underestimate. That is, it
must be the case that $f(s)\leq f^{*}(s)$, where $f^{*}(s)$ is the
true lowest cost of a complete solution in the sub-tree rooted at
$s$. A cost function with this property is said to be \textit{admissable}.

\subsection{\label{subsec:Exhaustive-List-Scheduling}Exhaustive List Scheduling}

Previous branch-and-bound approaches to optimal task scheduling have
used a state-space model that is inspired by list scheduling algorithms
\citep{ShaSinAst2010}. States are partial schedules in which some
subset of the tasks in the problem instance have been assigned to
a processor and given a start time. At each branching step, successors
are created by putting every possible ready task (tasks for which
all parents are already scheduled) on every possible processor at
the earliest possible start time. In this way, the search space demonstrates
every possible sequence of decisions that a list scheduling algorithm
could make. This branch-and-bound strategy can therefore be described
as exhaustive list scheduling. Figure \ref{fig:ELS-branching} demonstrates
how four possible child states can be reached from a partial schedule
with two ready tasks and two processors.

\begin{figure}
\begin{centering}
\includegraphics[width=0.3\columnwidth]{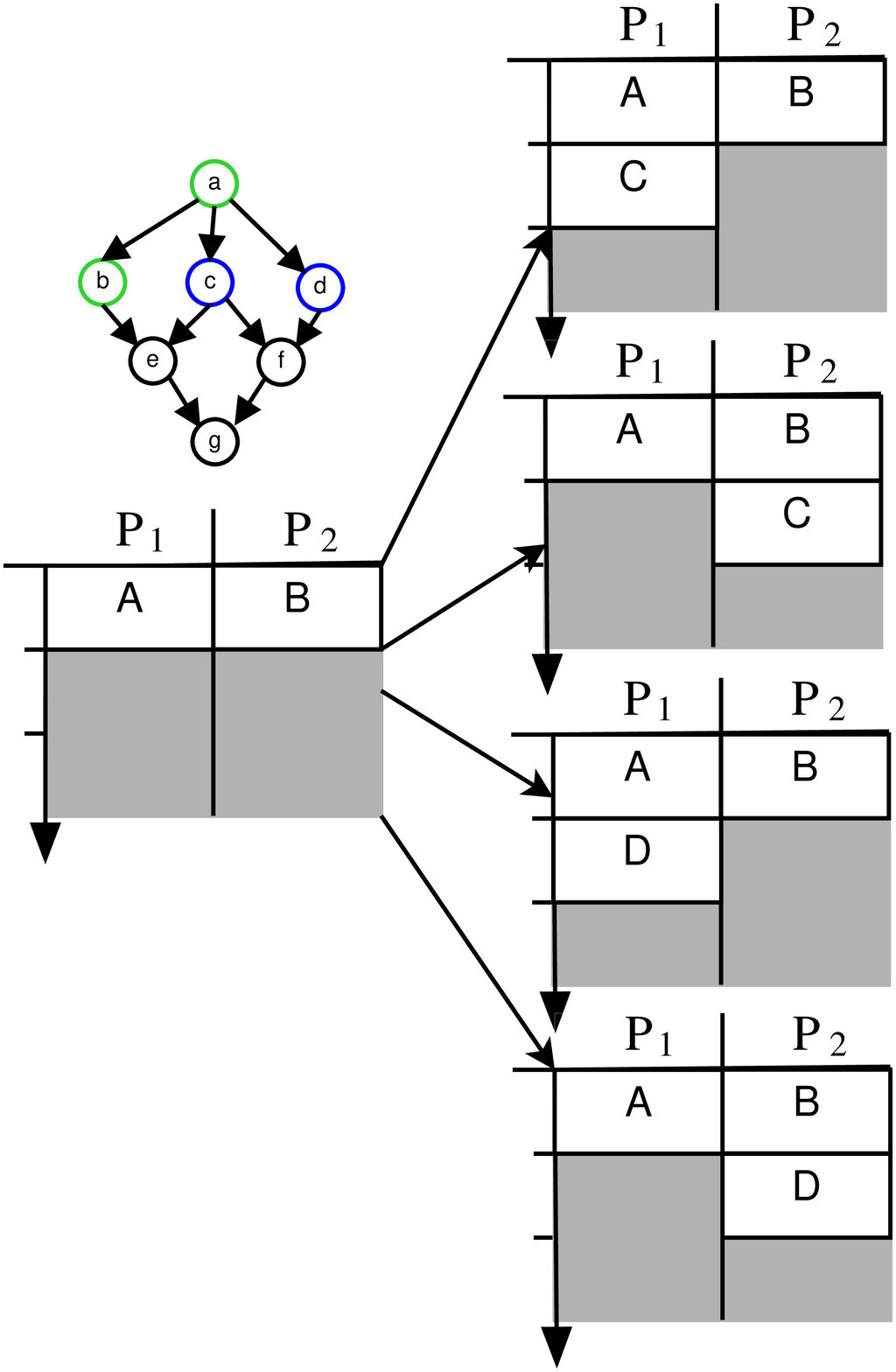}
\par\end{centering}
\caption{Branching in the ELS state space.\label{fig:ELS-branching}}
\end{figure}
Branch-and-bound works most efficiently when the sub-trees produced
when branching are entirely disjoint. Another way of stating this
is that there is only one possible path from the root of the tree
to any given state, and therefore there is only one way in which a
search can create this state. When this is not the case, a large amount
of work can be wasted: the same state could be expanded, and its sub-tree
subsequently explored, multiple times. Avoiding this requires doing
work to detect duplicate states, such as keeping a set of already
created states with which all new states must be compared. This process
increases the algorithm's need for both time and memory.

Unfortunately, the ELS strategy creates a lot of potential for duplicated
states \citep{Sinnen2014201}. This stems from two main sources:

\subsubsection*{Processor Permutation Duplicates}

Firstly, since the processors are homogeneous, any permutation of
the processors in a schedule represents an entirely equivalent schedule.
For example, Figure \ref{fig:Processor-permutation-duplicates} shows
two partial schedules which are identical aside from the labelling
of the processors. This means that for each truly unique complete
schedule, there will be $|P|!$ equivalent complete schedules in the
state space. This makes it very important to use some strategy of
processor normalisation when branching, such that these equivalent
states cannot be produced.

\begin{figure}
\begin{centering}
\includegraphics[width=0.5\columnwidth]{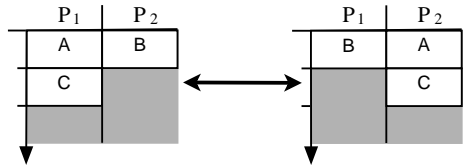}
\par\end{centering}
\caption{Processor permutation duplicates.\label{fig:Processor-permutation-duplicates}}
\end{figure}

\subsubsection*{Independent Decision Duplicates}

The other source of duplicate states is more difficult to deal with.
When tasks are independent of each other, the order in which they
are selected for scheduling can be changed without affecting the resulting
schedule. This means there is more than one path to the corresponding
state, and therefore a potential duplicate. Figure \ref{fig:Independent-decision-duplicates.}
demonstrates how multiple paths can lead to the same state. The only
way to avoid these duplicates is to enforce a particular sequence
onto these scheduling decisions. Under the ELS strategy, however,
no method is apparent in which this could be achieved while also allowing
all possible legitimate schedules to be produced.

\begin{figure}
\begin{centering}
\includegraphics[width=0.5\columnwidth]{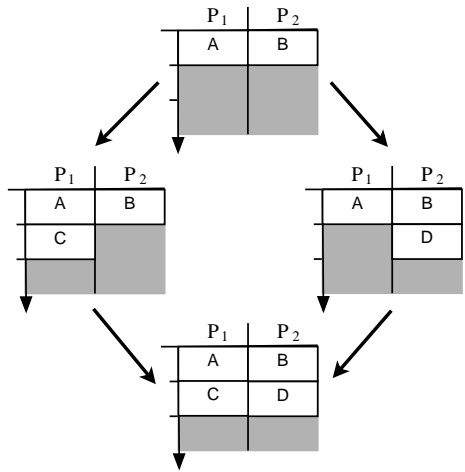}
\par\end{centering}
\caption{Independent decision duplicates.\label{fig:Independent-decision-duplicates.}}
\end{figure}

\section{Related Work}

Although the task scheduling model presented here can be applied to
the parallelisation of any arbitrary program, one specific area to
which task scheduling has been practically applied in recent years
is in the implementation of linear algebra solvers. Software packages
such as SuperMatrix\citep{chan2009foundations} and PLASMA\citep{kurzak2010scheduling}
represent linear algebra algorithms as DAGs, decomposing the steps
required into tasks, and use a variety of methods to schedule these
graphs. They do not, however, attempt to find optimal schedules, although
this could be helpful in some specific circumstances.

Due to the NP-hard nature of the task scheduling problem, the majority
of efforts have gone towards developing polynomial-time heuristic
solutions, producing schedules which are ``good enough'' for most
applications. The two major categories of approximation algorithms
are list scheduling and clustering \citep{Sin07TSP}. List scheduling
algorithms generally proceed in two phases: first, all of the tasks
are placed into an ordered list, according to some priority scheme.
Second, the tasks are removed from the list one by one, in order,
and scheduled. That is, they are assigned to a processor, and usually
given a start time as early as possible after the tasks previously
assigned to that processor have completed. Notable list scheduling
variants include MCP\citep{wu1990hypertool} and HLFET\citep{Kwok1998:btg}.
In a clustering algorithm, tasks are grouped together in clusters,
with the intention that if two tasks belong to the same cluster then
it is likely to be beneficial for them to be assigned to the same
processor. Technically, a clustering algorithm is only concerned with
suggesting a processor allocation for the tasks, with a subsequent
process similar to list scheduling usually required in order to produce
an actual schedule. Notable clustering variants include DCP\citep{kwok1996dynamic}
and DSC\citep{gerasoulis1992comparison}. While the ELS model resembles
a generalised version of the list scheduling approach (hence its name),
the new AO model more closely resembles a clustering approach.

The task scheduling problem is theoretically susceptible to many combinatorial
optimisation techniques. A{*}, the popular branch-and-bound search
algorithm, has been successfully applied to the optimal solving of
small problem instances through the ELS model \citep{ShaSinAst2010}
and earlier attempts \citep{KwoAhm2005OMT}. Another combinatorial
optimisation technique which has been applied to this task scheduling
problem is integer linear programming (ILP). This involves formulating
the problem instance as a linear program, a series of simultaneous
linear equations, where the variables are constrained to integer values.
A number of possible ILP formulations of the $P|prec,c_{ij}|C_{\mathrm{max}}$
problem have been proposed \citep{el2014new,mallach2016improved,Venugopalan2015:ilp},
with similarly promising results as branch-and-bound. Neither technique
has been shown to have a significant advantage over the other in terms
of the size of task scheduling problem that they can solve practically.
Most widely used ILP solvers are mature, highly optimised, proprietary
software packages. This optimisation means they are very likely to
have a built-in advantage in terms of speed when compared to a custom
implementation of state-space search. On the other hand, their complexity
and proprietary nature make them somewhat of a ``black box''. A
custom implementation makes it easier to gain potentially critical
insight into the behaviour of the solver. Additionally, it is often
easier to map domain-specific knowledge directly into a state-space
model than into an ILP formulation.

Many combinatorial optimisation problems can be naturally expressed
as permutation problems, meaning that the set of solutions consists
of all possible arrangements or orderings of some set of objects.
In the most famous of these, the travelling salesperson problem (TSP),
the goal is to plan a trip which visits a number of cities such that
the total distance travelled between the cities is minimised \citep{lin1965computer}.
Evidently, each possible ordering of the cities represents a valid
solution, and the optimal solution can be found by iterating over
these permutations. Other classic permutation problems include the
assignment problem \citep{munkres1957algorithms} and the MAX-SAT
problem \citep{borchers1998two}. Another class of combinatorial optimisation
problem can be thought of as a distribution or allocation problem,
in which a set of objects must be divided among a number of possible
groups. Examples include the graph colouring problem \citep{pardalos1998graph},
and the problem of task scheduling with independent tasks \citep{Sin07TSP}. 

The problem of task scheduling with communication delays is interesting
in that it can be considered as the composition of a distribution
problem with a permutation problem, as the tasks must both be optimally
divided among the processors and ordered optimally on each processor.
An example of a similar problem is that of minimizing part programs
for numerical control (NC) punch presses, which could be decomposed
into two distinct permutation problems (TSP and the assignment problem),
A proposed algorithm for this problem iterated between these two sub-problems,
using heuristic approaches to solve each \citep{walas1984algorithm}.
To the best of our knowledge, our proposed model is the first to attempt
optimal solving by branch-and-bound through separating two subproblems
into distinct phases, which are combined into one overall solution
space.

Pruning techniques are considered to be a fundamental aspect of branch-and-bound
search, as they have the potential to greatly limit the number of
states that need to be evaluated \citep{brusco2006branch}. Work on
pruning techniques for the ELS model demonstrated that they had a
dramatic impact on its performance \citep{Sinnen2014201}. It seems
unlikely that any state-space model could be competitive without the
development of effective pruning techniques.

\section{Duplicate-Free State-Space Model}

\label{sec:theory}

Both sources of duplicate states can be eliminated by adopting a new
state-space model (AO), in which the two dimensions of task scheduling
are dealt with separately. Rather than making all decisions about
a task's placement simultaneously, the search proceeds in two stages.
In the first stage, we decide for each task the processor to which
it will be assigned. We refer to this as the allocation phase. The
second stage of the search, beginning after all tasks are allocated,
decides the start time of each task. Given that each processor has
a known set of tasks allocated to it, this is equivalent to deciding
on an ordering for each set. Therefore, we refer to this as the ordering
phase. Once the allocation phase has determined the tasks' positions
in space, and the ordering phase has determined the tasks' positions
in time, a complete schedule is produced. Essentially, we divide the
problem of task scheduling into two distinct sub-problems, each of
which can be solved separately using distinct methods. However, while
we distinguish the two phases, we combine them into a single state-space,
making this a powerful approach.

\subsection{Allocation\label{subsec:Allocation}}

In the allocation phase, we wish to allocate each task to a processor.
Since the processors in our task scheduling problem are homogeneous,
the exact processor on which a task is placed is unimportant. What
matters is the way the tasks are grouped on the processors. The problem
of task allocation is therefore equivalent to the problem of producing
a partition of a set. A partition of a set $X$ is a set of non-overlapping
subsets of $X$, such that the union of the subsets is equal to $X$.
In other words, the set of all partitions of $X$ represents all possible
ways of grouping the elements of $X$. Applying this to our task scheduling
problem, we find all possible ways in which tasks could be grouped
on processors. In the allocation phase, we are therefore searching
for a partition of the set $V$ that can lead to an optimal schedule,
consisting of all tasks in our task graph. Figure \ref{fig:Partitions-of-tasks,}
shows different possible partitions of a set of tasks.

The search is conducted by constructing a series of partial partitions
of $V$. A partial partition $A$ of $V$ is defined as a partition
of a set $V^{\prime}$, $V^{\prime}\subseteq V$ \citep{Ronse13}.
At each level of the search we expand the subset $V^{\prime}$ by
adding one additional task $n\in V$, until $V^{\prime}=V$ and all
tasks are allocated. At each stage, the task $n$ selected can be
placed into any existing part $a\in A$ , or alternatively, a new
part can be added to $A$ containing only $n$. In our implementation
we build an allocation by adding tasks in a topological order, but
any pre-defined order suffices. As we are allocating tasks to a finite
number of processors $|P|$, we simply limit the number of parts allowed
in a partial partition to the same number. This has no effect other
than to reduce the search space by disregarding partitions consisting
of a larger number of sets. A partial allocation $A$ will have $|A|$
children again of size $|A|$ and one child of size $|A|+1$. Therefore,
as the number of parts in a partial allocation is non-decreasing as
we move deeper in the search tree, disregarding partial allocations
such that $|A|>|P|$ cannot prevent valid allocations where $|A|\leq|P|$
from being discovered. Figure \ref{fig:AO-branching} illustrates
how child states are derived from a partial partition: the new task
$D$ can be added to either of the existing parts. If scheduling with
three or more processors, it could be placed in a new part by itself,
but for this example we limit the allocation to two processors.

\begin{figure}
\begin{centering}
\includegraphics[width=0.5\columnwidth]{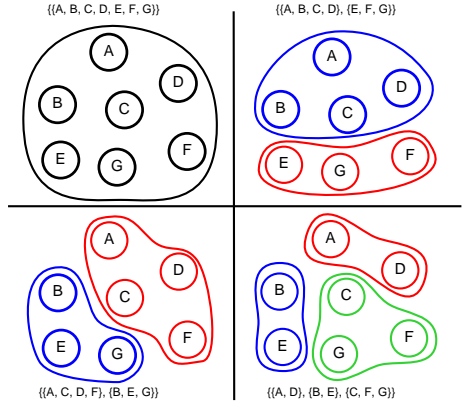}
\par\end{centering}
\caption{Partitions of tasks, equivalent to processor allocations.\label{fig:Partitions-of-tasks,}}
\end{figure}
\begin{lem}
\label{lem:allocation}\textbf{The allocation phase of the AO model
can produce all possible partitioning of tasks and there is only one
unique sequence that produces each possible paritition.}
\end{lem}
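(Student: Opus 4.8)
The statement has two parts: (i) completeness — every partition of $V$ into at most $|P|$ parts is produced by the branching process; and (ii) uniqueness — each such partition is produced by exactly one sequence of branching decisions. The plan is to exploit the fact that tasks are added in a fixed pre-defined order $n_1, n_2, \dots, n_{|V|}$ (say a topological order), so that a branching sequence is nothing more than a sequence of choices, one per task, recording where each $n_k$ is placed.

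For completeness, I would argue by induction on the number of tasks added. Given any target partition $\Pi = \{B_1, \dots, B_m\}$ of $V$ with $m \le |P|$, I will show that at every level $k$ there is a reachable partial partition $A_k$ which is exactly the restriction of $\Pi$ to $\{n_1,\dots,n_k\}$, i.e. $A_k = \{\, B_j \cap \{n_1,\dots,n_k\} : j, \ B_j \cap \{n_1,\dots,n_k\} \ne \emptyset \,\}$. The base case $A_0 = \emptyset$ is the root. For the inductive step, when task $n_{k+1}$ is added, it belongs to some block $B_j$ of $\Pi$; if $B_j$ already has an element among $n_1,\dots,n_k$ then the corresponding part is present in $A_k$ and we place $n_{k+1}$ into it, otherwise $n_{k+1}$ opens a new part. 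Either move is legal per the branching rule in Section~\ref{subsec:Allocation}, and since $\Pi$ has at most $|P|$ parts the number of parts in $A_{k+1}$ never exceeds $|P|$, so no valid partition is lost by the $|A| \le |P|$ restriction — a point already noted in the text. At level $|V|$ we obtain $A_{|V|} = \Pi$.

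For uniqueness, I would show the map from branching sequences to produced partitions is injective. Suppose two distinct branching sequences both yield the partition $\Pi$. Let $k$ be the first task index at which the two sequences differ. Up to that point they have built the identical partial partition $A_k$ (same parts, and since parts are just sets of tasks there is no labelling ambiguity — this is exactly where processor-permutation symmetry has been quotiented out). At step $k+1$ the two sequences place $n_{k+1}$ differently: either into two different existing parts of $A_k$, or one into an existing part and the other into a fresh singleton part. In every case the block of the resulting partition containing $n_{k+1}$ differs between the two branches — and crucially, subsequent additions of $n_{k+2},\dots$ can only enlarge existing parts or add new ones, never merge two parts, so the block containing $n_{k+1}$ relative to the already-placed tasks $\{n_1,\dots,n_{k+1}\}$ is frozen. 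Hence the two final partitions disagree on how $n_{k+1}$ groups with $\{n_1,\dots,n_k\}$, contradicting that both equal $\Pi$.

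The main obstacle, and the step deserving the most care, is the ``no merging'' observation underpinning uniqueness: I must state precisely that once two tasks are placed in different parts they remain in different parts forever, and once two are placed in the same part they stay together. This follows immediately because the only branching moves are ``add $n$ to an existing part'' or ``create a new singleton part'' — neither ever combines or splits existing parts — but it should be spelled out as an explicit invariant, since the whole uniqueness argument rests on it. Everything else (the induction for completeness, the injectivity bookkeeping) is routine once the fixed task order and this monotonicity invariant are in place.
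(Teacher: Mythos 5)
Your proposal is correct and follows essentially the same route as the paper's proof: both exploit the fixed task order to argue that the placement of each task is forced by the target partition, which the paper phrases as ``exactly one consistent move at each step'' and you phrase as completeness by induction plus injectivity via the first point of divergence. The only difference is that you state explicitly the monotonicity invariant (parts are never merged or split) that the paper leaves implicit in its uniqueness claim, which is a sound and slightly more careful packaging of the same argument.
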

\begin{proof}
We show how any given allocation $A_{\Omega}$, which is a complete
partition of $V$, is constructed with the proposed allocation procedure
and that there is only one possible choice at each step. We begin
with an empty partial allocation $A=\{\}$, and are presented with
the tasks in $V$ in a fixed order $n_{1}$, $n_{2}$, ... $n_{|V|}$.
We must always begin by placing $n_{1}$ into a new part, so that
$A=\{\{n_{1}\}\}$. Now we must place $n_{2}$. If $n_{1}$ and $n_{2}$
belong to the same part in $A_{\Omega}$, they must also be placed
in the same part in $A$, and so we must have $A=\{\{n_{1},n_{2}\}\}.$
Conversely, if $n_{1}$ and $n_{2}$ belong to different parts in
$A_{\Omega}$, the same must be true in $A$, and we make $A=\{\{n_{1}\},\{n_{2}\}\}.$
For each subsequent task $n_{i}$, if $n_{i}$ in $A_{\Omega}$ belongs
to the same part as any of tasks $n_{1}$ to $n_{i-1}$, we must place
it in the same part in $A$. If $n_{i}$ does not share a part in
$A_{\Omega}$ with any task $n_{1}$ to $n_{i-1}$, it must be placed
in a new part by itself. At each step, there is exactly one possible
move that can be taken in order to keep $A$ consistent with $A_{\Omega}$.
Since there is always at least one move, it is possible to produce
any partition of $V$ in this fashion, and because there is always
at most one move, there is only one path that will produce a given
partition. Since each distinct allocation can only be produced with
one unique sequence of moves, duplicate allocations are not possible.
\end{proof}
By Lemma \ref{lem:allocation}, we have removed the first source of
duplicates: there is no possibility of producing allocations that
differ from each other only by the permutation of processors.

In a naive approach to allocation, in which we simply assign each
task in $V$ to an arbitrary processor in $P$ without considering
their homogeneity, there are $|P|^{|V|}$ possible outcomes. The number
of possible complete allocations in this state space is given by the
formula $\sum_{k=1}^{|P|}\begin{Bmatrix}|V|\\
k
\end{Bmatrix}$, where $\begin{Bmatrix}n\\
k
\end{Bmatrix}$ represents the number of distinct ways to divide a set of size $n$
into $k$ subsets (this being known as a Stirling number of the second
kind) \citep{stephen1975some}. In the worst case, where the number
of available processors is equal to the number of tasks, the formula
gives us what is known as a Bell number: the total number of possible
partitions of a set with size $n$. Bell numbers are known to be asymptotically
bounded such that $B_{n}<(\frac{0.792n}{ln(n+1)})^{n}$\citep{berend2010improved}.
This bound demonstrates that, although the number of allocations still
grows exponentially, it is an exponential function of significantly
lower order than the naive $|P|^{|V|}$approach.

\begin{figure}
\begin{centering}
\includegraphics[width=0.75\columnwidth]{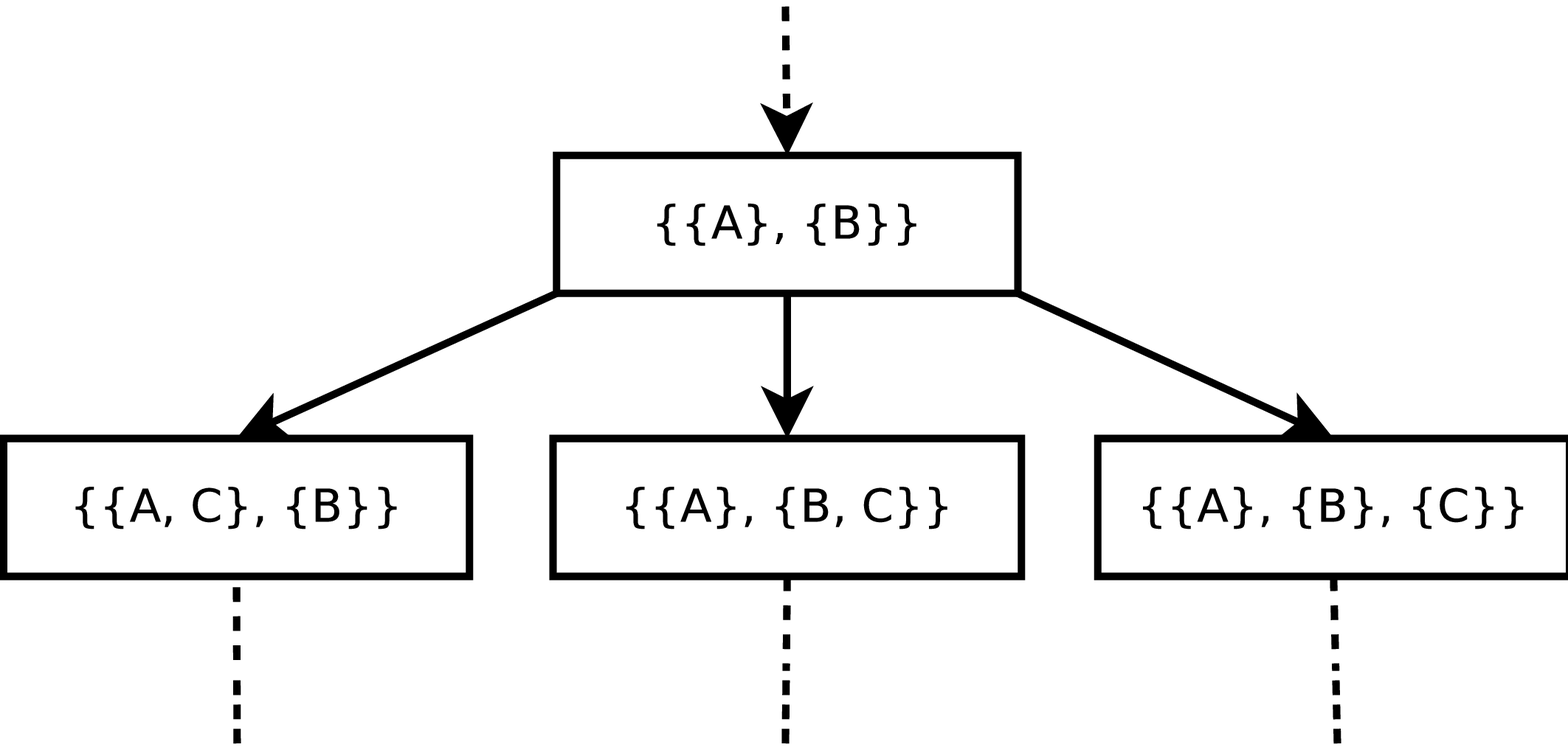}
\par\end{centering}
\caption{Branching in the allocation state-space with a maximum of two processors.\label{fig:AO-branching}}
\end{figure}
\begin{algorithm}[h]
\begin{algorithmic}
\Require $A_s$ is a partial partition of $V$
\Function{expand-alloc-state}{$s$, $V$, $|P|$}
	\State $unallocated \gets V \setminus \bigcup\nolimits_{a \in A_s} a$
	\State $nextTask \gets min_{n \in unallocated} topo\mhyphen order(n, G)$
	\State $children_s \gets \emptyset$
	\ForAll{$a \in A_s$}
		\State $s_child \gets$ a copy of $s$
		\State $a_child \gets$ the set in $A_child$ which is equal to $a$
		\State $a_child \gets a_child \cup \{nextTask\}$
		\State $children_s \gets children_s \cup \{s_child\}$
	\EndFor
	\If{$\lvert A_s \rvert < |P|$}
		\State $s_child \gets$ a copy of $s$
		\State $a_new \gets \{nextTask\}$
		\State $A_child \gets A_child \cup \{a_new\}$
		\State $children_s \gets children_s \cup \{s_child\}$
	\EndIf
\Return $children_s$
\EndFunction
\end{algorithmic}

\caption{Defining children of an allocation state.}
\end{algorithm}

\subsubsection{\label{subsec:Allocation-Cost-Function}Allocation Cost Function}

For branch-and-bound search using our AO state-space model to be effective,
we need an admissable heuristic to determine our cost function $f$,
such that $f(s)$ gives a lower bound for the minimum length of any
schedule resulting from the partial partition $A$ at state $s$.
The efficiency of a branch-and-bound search can be defined by the
number of states in the state-space that need to be examined before
a provably optimal solution can be found \citep{ibaraki1977computational}.
To prove that a solution is optimal, all states in the state-space
with a lower $f$-value must be examined and shown to not be solutions
themselves. Such states are said to be 'critical' \citep{talbi2006parallel}.
A tighter bound can make the search more efficient, while a looser
bound may make the search less efficient. This is because a tighter
bound is likely to increase the $f$-value of some states, possibly
making them no longer critical, while a looser bound can do the opposite.

In the case of allocation, there are two crucial types of information
we can obtain from a partial partition $A$, which allow us to determine
a lower bound for the length of a resulting schedule. The first, and
simplest, is how well the computational load is balanced between the
different groupings of tasks. In an ideal case, with no gaps in execution,
a processor assigned a grouping $a\in A$ will still require time
equal to the sum of the computational weights $w(n)$ of the tasks
$n\in a$, i.e. $\sum_{n\in a}w(n)$, in order to finish. The overall
schedule length is therefore bounded by the total weight of the most
heavily loaded grouping in $A$. We improve this bound further by
considering the time at which execution can start on each processor.
The earliest possible start time for a task $n$ is given by its allocated
top level $tl_{\alpha}(n)$. The earliest that a processor can begin
executing tasks can therefore be determined by finding the minimum
allocated top level among its assigned tasks $n\in a$. Similarly,
the minimum bottom level (disregarding the weight of the first task
in the level path) among the grouping tells us the minimum amount
of time required to reach the end of the schedule once all tasks on
this processor have finished execution. We therefore add the minimum
allocated top level and minimum bottom level (disregarding (disregarding
the weight of first task in the level path) to each grouping's computational
load to obtain our first bound.

\begin{align}
f_{l\mathrm{oad}}(s)=\mathrm{max}{}_{a\in A} & \biggl\{ min_{n\in a}tl_{\alpha}(n)+\sum_{n\in a}w(n)\nonumber \\
 & +min_{n\in a}(bl_{\alpha}(n)-w(n))\biggr\}\label{eq:fload-bl}
\end{align}

The second bound derives from our knowledge of which communication
costs must be incurred, and is obtained from the length of the allocated
critical path of the task graph; that is, the longest path through
the task graph given the particular set of allocations. It is usual
when finding a critical path for a task graph to ignore the edge weights,
as in an ideal case no communication costs would need to be incurred.
However, if two tasks $i,j\in V$ have been assigned to different
groupings in $A$, and an edge $e_{ij}$ exists between them, we now
know that the communication represented by that edge must take place.
Such edges can therefore be included when determining the critical
path, and may increase its length. As all computations and communications
in the allocated critical path must be completed in sequence, the
overall schedule must be at least as long, and this gives us our second
bound. Figure \ref{fig:allocated-critical-heuristic} shows an example
of an allocated critical path. In practise, it is helpful to consider
the allocated top and bottom levels of the tasks when determining
the allocated critical path, since we use those values for several
other calculations. The length of the longest path in a task graph
which includes a task $n$ can be found by adding the top and bottom
levels of $n$. It follows that the length of the allocated critical
path is the greatest value for $tl_{\alpha}(n)+bl_{\alpha}(n)$ among
all $n\in V^{\prime}$. 

\begin{equation}
f_{\mathrm{acp}}(s)=\mathrm{max}_{n\in V^{\prime}}\left\{ tl_{\mathrm{\alpha}}(n)+bl_{\alpha}(n)\right\} \label{eq:facp}
\end{equation}
\begin{figure}
\begin{centering}
\includegraphics[width=0.3\columnwidth]{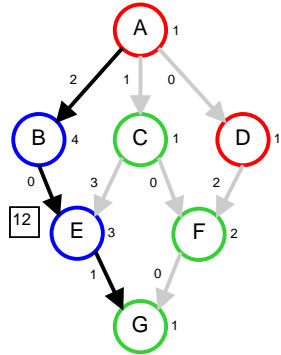}
\par\end{centering}
\caption{The allocated critical path heuristic.\label{fig:allocated-critical-heuristic}}
\end{figure}
Since we want the tightest bound possible, the maximum of these two
bounds is taken as the final $f$-value. 

\begin{equation}
f_{\mathrm{alloc}}(s)=\mathrm{max}\{f_{\mathrm{load}}(s),f_{\mathrm{acp}}(s)\}
\end{equation}

By their nature, these two bounds oppose each other; lowering one
is likely to increase the other. The shortest possible allocated critical
path can be trivially obtained simply by allocating all tasks to the
same processor, but this will cause the total computational weight
of that processor to be the maximum possible. Likewise, the lowest
possible computational weight on a single processor can be achieved
simply by allocating each task to a different processor, but this
means that all communication costs will be incurred and therefore
the allocated critical path will be the longest possible. Combining
these two bounds guides the search to find the best possible compromise
between computational load-balancing and the elimination of communication
costs.

\subsection{\label{subsec:Ordering}Ordering}

In the ordering phase, we begin with a complete allocation, and our
aim is to produce a complete schedule $S$. After giving an arbitrary
ordering to both the sets in $A$ and the processors in $P$, we can
define the processor allocation in $S$ such that $n\in a_{i}\implies proc(n)=p_{i}$.
 The remaining step is to determine the optimal start time for each
task. Given a particular ordering of the tasks $n\in p_{i}$, the
best start time for each task is trivial to obtain, as it is simply
the earliest it is possible for that task to start, considering the
availability of the processor and the precedence constraints induced
by the incoming edges. To complete our schedule we therefore only
need to determine an ordering for each set of tasks $a_{i}\in A$.
Our search could proceed by enumerating all possible permutations
of the tasks within their processors. However, it is likely that many
of the possible permutations do not describe a valid schedule. This
will occur if any task is placed in order after one of its descendants
(or before one of its ancestors).

In order to produce only valid orderings, an approach inspired by
list scheduling is taken. In this variant, however, each processor
$p_{i}$ is considered separately, with a local ready list $R(p_{i})$.
Initially, a task $n\in p_{i}$ is said to be locally ready if it
has no predecessors also on $p_{i}$. At each step we can select a
task $n\in R(p_{i})$ and place it next in order on $p_{i}$. Those
tasks which have been selected and placed in order are called \emph{ordered,
}while those which have not are called \emph{unordered. }A simple
definition of the local ready list states that a task $n\in p_{i}$
belongs to $R(p_{i})$ if it has no unordered predecessors also on
$p_{i}$. Unfortunately, this formulation allows invalid states to
be reached by the search, as seemingly valid local orders may combine
to produce a schedule with an invalid global ordering. The simple
definition still guarantees that all valid schedules can be produced,
and invalid branches of the state-space can easily be removed from
consideration: the f-value of certain invalid states is undefined,
and in calculcation will increase indefinitely. However, an indeterminate
amount of work may be wasted in exploring these invalid states. A
correct formulation of the local ready list requires that any task
$n_{i}$ which is ordered on $p_{i}$ must be subsequently considered
to be a predecessor of any task $n_{j}$ on $p_{i}$ which is still
unordered. This more complex ready condition is explained in detail
in Section \ref{sec:Avoiding-Invalid-Stat}. 

After a task $n$ has been ordered, each of its descendants on $p_{i}$
must be checked to see if  the ready condition has now been met, in
which case they will be added to $R(p_{i})$. Following this process
to the end, we can produce any possible valid ordering of the tasks
on $p_{i}$. Figure \ref{fig:BranchingA-single-state} illustrates
a single state in an ordering state-space, and shows the options for
branching which are available. In this example, several tasks have
already been ordered on processor $P1$, and it has again been selected
for consideration. Since they have no unordered predecessors also
on $P1$, tasks $e$ and $f$ are both ready to be ordered. There
are therefore two possible children of this state, one in which each
of these ready tasks is placed next in order on $P1$. Note that $f$
is ready to be ordered despite the fact that its parent, $d$, has
not been ordered. Since $d$ is allocated to a different processor,
it is not considered when determining the readiness of $f$. Neither
is its own parent, $a$, which would necessarily have been the very
first task scheduled under the ELS model. 

\begin{figure}
\begin{centering}
\includegraphics[width=0.4\columnwidth]{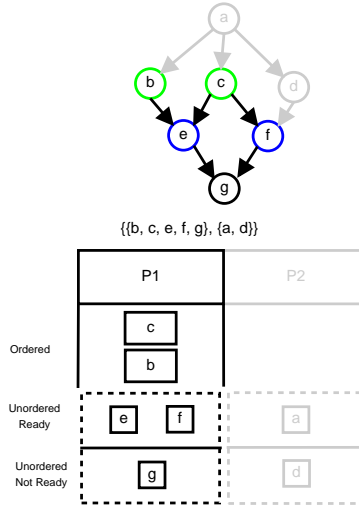}
\par\end{centering}
\caption{\label{fig:BranchingA-single-state}A single state in the ordering
state space.}
\end{figure}
Producing a full schedule requires that this process be completed
for all processors in $P$. At each level of the search, we can select
a processor $p_{i}\in P$ and order one of its tasks. The order in
which processors are selected can be decided arbitrarily; however,
in order to avoid duplication, it must be fixed by some scheme such
that the processor selected can be determined solely by the depth
of the current state. The simplest method to achieve this is to proceed
through the processors in order: first order all the tasks on $p_{1}$,
then all the tasks on $p_{2}$, and so on to $p_{n}$. Another method
is to alternate between the processors in a round-robin fashion. Unlike
in exhaustive list scheduling, tasks are not guaranteed to be placed
into the schedule in topological order. When a task is ordered, its
predecessors on other processors may still be unordered, and therefore
their start times may not be known. During the ordering process, therefore,
a task $n$ may only be given an \emph{estimated earliest start time
}$eest(n)$. For all unordered tasks, $eest(n)=tl_{\mathrm{\alpha}}(n)$,
its allocated top level. For ordered tasks, we first define $prev(n)$
as the task ordered immediately before $n$ on the same processor
$proc(n)$. We also define the estimated data ready time $edrt(n_{j})=max_{n_{i}\in parents(n_{j})}\left\{ eest(n_{i})+w(n_{i})+c(e_{ij})\right\} $.
Where $prev(n)$ does not exist, $eest(n)=edrt(n)$. Otherwise, $eest(n)=max(eest(prev(n))+w(prev(n)),edrt(n))$.
These are the same as the conditions for determining the earliest
start time of a task in ELS whose ancestors have already been scheduled,
but replacing the fixed, known start times of these ancestors with
estimated earliest start times in each instance.

In our implementation, the changes in estimated earliest start times
caused by the ordering of each new task $n_{\Delta}$ are propagated
recursively. First, $eest(n_{\Delta})$ is calculated based on the
EEST of the parents of $n_{\Delta}$, and the EEST of the task immediately
preceding it on its processor $proc(n_{\Delta})$. The algorithm then
proceeds to update the EEST of all ordered tasks whose start times
depend on $n_{\Delta}$ - any of its children which, being allocated
to a different processor, may have already been ordered. Once the
EEST of all these tasks has been recalculated, we continue propagating
to any previously ordered tasks which depend on them, and so on. In
this case, dependent tasks include not only children, but also tasks
which may be scheduled immediately after them on their respective
processor. The red arrows in figure \ref{fig:Updating-estimated-earliest}
show how the EEST updates are propogated after the ordering of task
$a$, both through the original communication dependencies in the
task graph and the new dependencies determined by the previously decided
ordering.

\begin{figure}
\begin{centering}
\includegraphics[width=0.4\columnwidth]{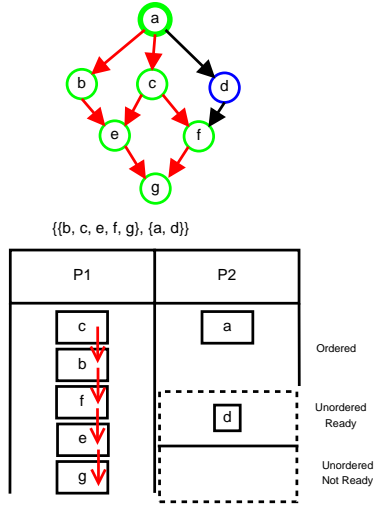}
\par\end{centering}
\caption{\label{fig:Updating-estimated-earliest}Updating estimated earliest
start times after ordering a new task.}
\end{figure}
In this way, we have solved the problem of duplicates arising from
making the same decisions in a different order. By allocating each
task to a processor ahead of time, and enforcing a strict order on
the processors, it is no longer possible for these situations to arise.
Where before we might have placed task $n_{2}$ on $p_{2}$ and then
task $n_{1}$ on $p_{1}$, we now must always place task $n_{1}$
on $p_{1}$ and then task $n_{2}$ on $p_{2}$. 
\begin{lem}
\textbf{The ordering phase of the AO model can produce all possible
valid orderings of tasks and there is only one unique sequence that
produces each possible ordering. }
\end{lem}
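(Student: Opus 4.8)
The plan is to mirror, step for step, the proof of Lemma~\ref{lem:allocation}. First I would fix an arbitrary complete allocation $A=\{a_1,\dots,a_k\}$ together with a fixed ordering of its parts into processors $p_1,\dots,p_k$, and I would also fix the depth-indexed scheme that decides which processor is selected for the next ordering move (e.g.\ ``finish $p_1$, then $p_2$, \dots'' or round-robin); the only property of the scheme I will need is that the selected processor is a function of the current depth alone. Then I would take an arbitrary \emph{valid ordering} $O_\Omega$ --- an ordering that is the start-time sequence, processor by processor, of some valid schedule --- and argue two things in tandem: that the search can construct $O_\Omega$, and that at every branching step there is exactly one choice consistent with $O_\Omega$.

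The core would be an induction on the number of already-ordered tasks, with the invariant that the ordered tasks on each processor form a prefix of $O_\Omega$ restricted to that processor. For the inductive step I would let $p_j$ be the scheme-selected processor at the current depth and let $n^*$ be the next task of $O_\Omega$ restricted to $p_j$, and show that $n^*$ lies in the local ready list $R(p_j)$: its predecessors in $G$ that are on $p_j$ finish before it starts in the valid schedule realising $O_\Omega$, hence precede it in $O_\Omega|_{p_j}$ and are already ordered by the invariant; and the tasks that the ready condition additionally counts as predecessors of $n^*$ (the tasks already ordered on $p_j$) are exactly the prefix placed so far, which also precede $n^*$. So placing $n^*$ next on $p_j$ is a legal move and it preserves the invariant. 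Conversely, any legal move must place some ready task of the scheme-selected processor $p_j$ next in order, and placing anything other than $n^*$ destroys the prefix property --- so exactly one move keeps the state consistent with $O_\Omega$. Finiteness of the process (exactly $|V|$ moves) then delivers both halves at once: ``at least one consistent move at each step'' gives completeness ($O_\Omega$ is produced), ``at most one consistent move at each step'' gives uniqueness of the producing sequence, and uniqueness of the sequence immediately rules out duplicate orderings, exactly as in Lemma~\ref{lem:allocation}.

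I expect the main obstacle to be the claim that $n^*$ is always ready, since this is precisely where the subtlety about invalid states (Section~\ref{sec:Avoiding-Invalid-Stat}) enters: the argument needs the ready condition to be neither too weak (it must still admit $n^*$) nor too strong (it must not exclude a task that legitimately comes next in some valid schedule). Making this watertight forces me to pin down what ``valid ordering'' means --- in particular that combining the per-processor orders induces an acyclic global precedence relation --- and to verify that the induced-predecessor bookkeeping of the correct ready condition respects this. A secondary but routine point I would also dispatch is that the search never stalls: whenever the scheme selects a processor that still has unordered tasks, the acyclicity of $G$ restricted to those tasks guarantees at least one ready task, so the search genuinely reaches depth $|V|$.
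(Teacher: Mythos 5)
Your proposal takes essentially the same route as the paper's own proof: replicate an arbitrary valid schedule consistent with the fixed allocation under a fixed processor-selection scheme, and show that at each step exactly one move (placing that schedule's next task on the selected processor) keeps the construction consistent, which yields completeness and uniqueness of the producing sequence simultaneously. Your explicit prefix-invariant induction and the readiness check for $n^*$ merely spell out details the paper compresses into the single assertion that ``$n_x$ must belong to the current ready queue,'' so the argument is correct and matches the paper's approach.
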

\begin{proof}
This proof is similar to that of Lemma~\ref{lem:allocation}. We
show how any given complete valid schedule $S_{\Omega}$, which implies
the complete allocation $A_{x}$, is constructed with the proposed
ordering procedure and that there is only one possible choice at each
step. Consider the sequence of moves required to replicate $S_{\Omega}$.
We begin with an empty schedule $S$, and then select processors for
consideration in a fixed and deterministic order (e.g. round robin).
Say that we select $p_{i}$. In schedule $S_{\Omega}$, there is a
task $n_{x}$ which is next in order on $p_{i}$. Since $S_{\Omega}$
is a valid schedule consistent with $A_{x}$, $n_{x}$ must belong
to the current ready queue for $p_{i}$. Since $n_{x}$ is next in
order in $S_{\Omega}$, it must be selected to be ordered next in
$S$. At each step, there is exactly one possible move that can be
taken in order to keep $S$ consistent with $S_{\Omega}$. Since there
is always at least one move, it is possible to produce any schedule
consistent with $A_{x}$ in this fashion, and because there is always
at most one move, there is only one path that will produce a given
schedule. Since each distinct schedule can only be produced with one
unique sequence of moves, duplicate schedules are not possible.
\end{proof}
Please note that this lemma is \emph{not} ruling out the creation
of invalid orderings. How they are avoided is discussed in \ref{sec:Avoiding-Invalid-Stat}.

\subsubsection*{Ordering Cost Function}

The heuristic for determining $f$-values in the ordering stage follows
a similar pattern to that for allocation. The difference lies in the
fact that independent tasks allocated to the same processor can now
delay one another. During the allocation phase, our critical path
heuristic assumes that every task begins as early as it theoretically
could on the processor it is assigned to, only based on the allocated
top-level. Another way of looking at this is that we assume that every
task will be first in order (and that this is also true for all ancestors).
Clearly this is not the case, and as we decide the actual order of
the tasks on a processor $p$, the tasks which are placed earlier
in the order are likely to push back the start times of the tasks
placed later in the order, as they must wait to be executed. Communications
from other processors can also introduce idle times, during which
the processor does nothing as the data required for the task next
in order is not yet ready. The $eest$ of a task takes both of these
factors into account. For each state $s$, the current estimated finish
time of a processor $p_{i}$ is the latest estimated finish time of
any task $n\in V:proc(n)=p_{i}$ which has so far been ordered. This
estimated finish time must include both the full computation time
of each task already ordered on $p_{i}$, as well as any idle time
incurred between tasks.

With this in mind, we define our two bounds like so: first, the latest
estimated start time of any task already ordered, plus the allocated
bottom level of that task. We refer to this as the partially scheduled
critical path, as it corresponds to the allocated critical path through
our task graph, but with the addition of the now known idle times
and intra-processor communication delays.

\begin{equation}
f_{\mathrm{scp}}(s)=\mathrm{max}_{n\in ordered(s)}\left\{ eest(n)+bl_{\mathrm{\alpha}}(n)\right\} 
\end{equation}

Second, the latest finish time of any processor in the partial schedule,
plus the total computational weight of all tasks allocated to that
processor which are not yet scheduled.

\begin{equation}
f_{\mathrm{ordered-load}}(s)=\mathrm{max}{}_{p\in P}\left\{ t_{\mathrm{f}}(p)+\sum_{n\in p\cap unordered(s)}w(n)\right\} 
\end{equation}

Again, this corresponds to the total computational load on a processor
with the addition of now known idle times and intra-processor delays.
To obtain the tightest possible bound, the maximum of these bounds
is taken as the final $f$-value.

\begin{equation}
f_{\mathrm{order}}(s)=\mathrm{max}\{f_{\mathrm{scp}}(s),f_{\mathrm{ordered-load}}(s)\}
\end{equation}

\subsection{Combined State-Space}

Solving a task scheduling problem instance requires both the allocation
and ordering sub-problems to be solved in conjunction. To produce
a combined state-space, we begin with the allocation search tree,
$S_{\mathrm{A}}$. The leaves of this tree represent every possible
distinct allocation of tasks in $G$ to processors in $P$. Say that
leaf $l_{i}$ represents allocation $A_{i}$. We produce the ordering
search tree $S_{\mathrm{O}_{i}}$ using $A_{i}$. The leaves of $S_{\mathrm{O}_{i}}$
represent every distinct complete schedule of $G$ which is consistent
with $A_{i}$. If we take each leaf $l_{i}$ in $S_{A}$ and set the
root of tree $S_{\mathrm{O}_{i}}$ as its child, the result is the
combined tree $S_{\mathrm{AO}}$, the leaves of which represent every
distinct complete schedule of $G$ on the processors in $P$. Figure
\ref{fig:combined-state-space} demonstrates how all of the ordering
sub-trees in $S_{AO}$ sprout from the leaves of the allocation tree
above them.

\begin{figure}[h]
\begin{centering}
\includegraphics[width=0.5\columnwidth]{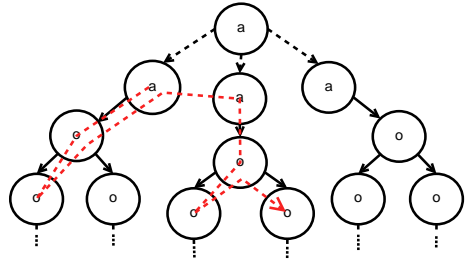}
\par\end{centering}
\caption{\label{fig:combined-state-space}A possible search path through the
combined state space.}
\end{figure}
A branch-and-bound search conducted on this state-space will begin
by searching the allocation state-space. Each allocation state representing
a complete allocation has one child state, which is an initial ordering
state with this allocation. When considering the allocation sub-problem
in isolation, we define the optimal allocation as that which has the
smallest possible lower bound on the length of a schedule resulting
from it. Unfortunately, these lower bounds cannot be tight and therefore
it is not guaranteed that the allocation with the smallest lower bound
will actually produce the shortest possible schedule. This means generally
that in the combined state-space, a number of complete allocations
are investigated by the search and have their possible orderings evaluated,
as represented by the red arrow in figure \ref{fig:combined-state-space}.
The tighter the bound which can be calculated, the more quickly the
search is likely to be guided toward a truly optimal allocation.

\section{Avoiding Invalid States\label{sec:Avoiding-Invalid-Stat}}

The simple definition of a ready list as done in Section~\ref{subsec:Ordering}
enforces valid local orders for all processors, but in same cases
their combination can be an invalid global ordering. To explain why
this happens, we model a partial schedule as a graph showing all of
the dependencies between tasks. For a task graph $G$, a partial schedule
$S^{\prime}$can be represented by augmenting $G$ to produce a partial
schedule graph $G_{S^{\prime}}.$ We begin with the graph $G$. Say
that in $S^{\prime}$, a task $n_{1}$ is ordered on processor $p_{1}$,
and a task $n_{2}$ is also ordered on $p_{1}$, but later in the
sequence. We check for the edge $e_{12}$ in the task graph. If $e_{12}\notin E$,
we add $e_{12}$to $E$. This new edge represents that, according
to the ordering defined by our partial schedule, $n_{2}$must begin
after $n_{1}$. This can be considered as a new type of dependency,
which we call an ordering dependency, as opposed to the original communication
dependencies in $G$. Once edges have been added corresponding to
all ordered tasks in $S^{\prime}$, we have our graph $G_{S^{\prime}}$.
The presence of a cycle in this graph indicates that the ordering
is invalid, as a cycle of dependencies is unsatisfiable. Since the
graph $G$ is acyclic, and if $n_{i}$ is an ancestor of $n_{j}$
in $G$ then the ordering edge $e_{ji}$ cannot be created, it is
necessary that any cycle in $G_{S^{\prime}}$ will contain at least
two ordering edges. Figure \ref{fig:cycle-example} shows a minimal
example of such a cycle, with ordering edges marked by dashed lines.

\begin{figure}
\begin{centering}
\subfloat[Task graph and allocation.]{\begin{centering}
\includegraphics[width=0.4\textwidth]{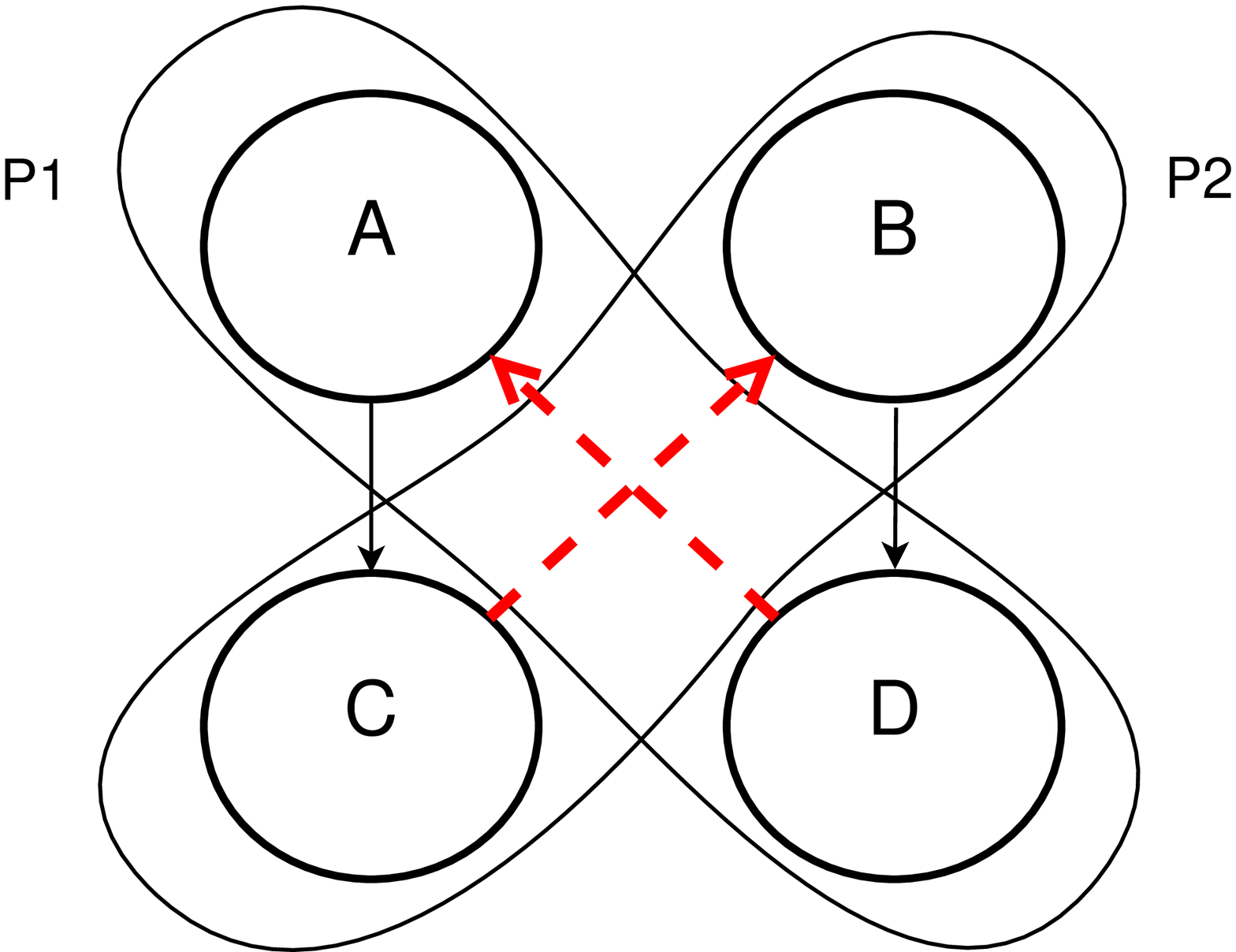}
\par\end{centering}
}\subfloat[Ordering which produces a cycle.]{\begin{centering}
\includegraphics[width=0.4\textwidth]{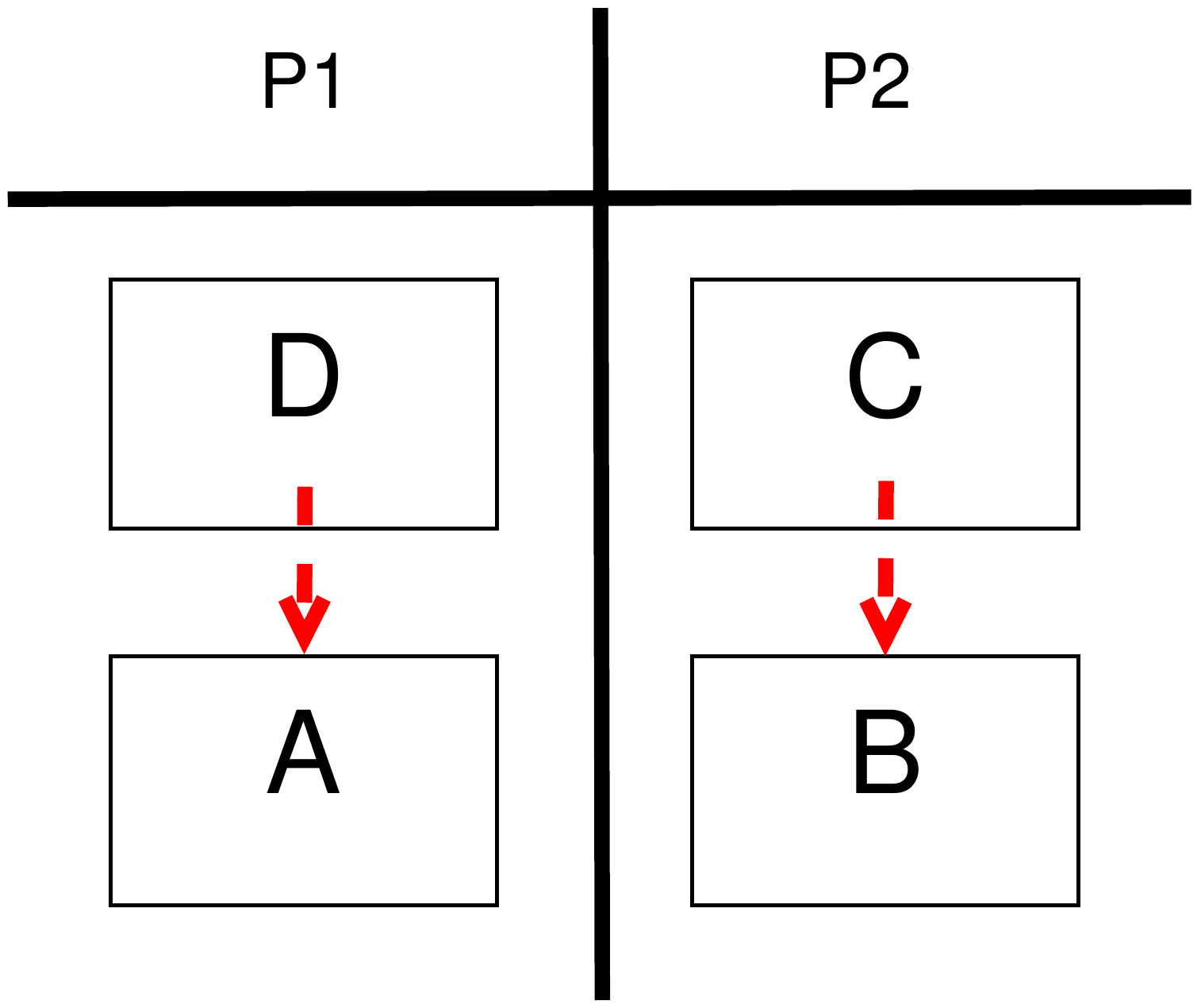}
\par\end{centering}
}
\par\end{centering}
\centering{}\caption{A minimal example of a cycle created by ordering edges.\label{fig:cycle-example}}
\end{figure}
In our preliminary work in \citep{Orr2015:dfs}, such states were
removed from consideration during the search as their cyclic nature
made their $f$-values increase infinitely during calculation, until
they passed an upper bound for the schedule length and it was clear
they could be ignored. However, it is possible for a state to exist
in which no cycle yet exists, but for which it is inevitable that
a cycle will be created as the ordering process continues. Say, for
example, that the introduction of edge $e_{ij}$ would create a cycle
in $G_{S^{\prime}}$, but in $S^{\prime}$ the task $n_{i}$ has already
been ordered while $n_{j}$ has not. In order for the schedule to
be completed, $n_{j}$ must eventually be ordered, at which point
a cycle will be formed. Here, $S^{\prime}$represents an entire subtree
of states from which no valid schedule can be reached. None of these
states can be selected as the optimal solution, so this does not present
a threat to the accuracy of the search process. However, it does represent
a potentially substantial amount of wasted work performed by the search
algorithm. Ideally the formulation of the AO model would be such that
it allows the creation of any valid solution, and \emph{only }valid
solutions.

The key to avoiding this unnecessary work is the observation that,
given that $n_{i}$ has been ordered and $n_{j}$ has not, it is inevitable
that $n_{j}$must eventually be ordered later than $n_{i}$. Therefore,
for all descendents of this partial schedule in which $n_{j}$ is
ordered, the ordering edge $e_{ij}$ must be in $G_{S^{\prime}}$.
We can therefore define a more useful augmented task graph, $G_{S^{\prime}}^{*}$,
which 'looks ahead' to determine cycles that must inevitably occur.
In this graph, the ordering edge $e_{ij}$ exists if $proc(n_{i})=proc(n_{j})$
and either $n_{j}$ is ordered later than $n_{i}$, or $n_{i}$ has
been ordered and $n_{j}$ has not. 

To avoid these cycles, we propose a modification to the condition
we use to determine if a task is free to be ordered. The new condition
is this: a task $n_{i}$ on processor $p_{i}$ is free to be ordered
if it has no ancestors in graph $G_{S^{\prime}}^{*}$ which are also
on $p_{i}$ and have not already been ordered in $S^{\prime}$. In
the original formulation of AO, this condition used only the graph
$G$. However, the ordering edges specific to the partial solution
$S^{\prime}$ must be considered equally with the communication edges
that are common to all partial solutions. The creation of a cycle
in $G_{S^{\prime}}^{*}$ requires that an ordering edge is introduced
from a task $n_{i}$ to task $n_{j}$, where $n_{i}$ was already
reachable from $n_{j}$ using at least one ordering edge. By definition,
this means that $n_{j}$ is the ancestor of $n_{i}$ in $G_{S^{\prime}}^{*}$.
Therefore, according to the new condition, $n_{i}$ cannot be considered
free until $n_{j}$ has been ordered, meaning that the edge $e_{ij}$
can never be introduced and the cycle can never be formed. By treating
the ordering dependencies created during the ordering process in the
same way as the original communication dependencies, we ensure that
states with an invalid global ordering cannot be reached.

We implement this more precise definition of a free task by maintaining
a record of $G_{S^{\prime}}^{*}$with each state in the form of a
transitive closure matrix. Whenever a new task is ordered, the transitive
closure is updated to reflect the new ordering dependencies. We can
then use this matrix to determine which tasks are free when creating
the children of a state.

\subsection{Evaluation\label{subsec:Invalid-States-Evaluation}}

Use of the simpler definition of the ready list as in Section~\ref{subsec:Ordering}
does not produce incorrect results; invalid states eventually have
their f-values escalate indefinitely, and hence are quickly removed
from consideration. An indeterminate amount of work was wasted before
reaching and removing these obviously wrong states, however. We therefore
find it necessary to evaluate whether the work saved by avoiding invalid
states outweighs the additional algorithmic overhead necessary to
do so. To determine experimentally the impact of this, we performed
A{*} searches on a set of task graphs using versions of the model
both with invalid state avoidance and without. Task graphs were chosen
corresponding to a wide variety of program structures. Approximately
270 graphs with 21 tasks were selected. These graphs were a mix of
the following DAG structure types: Independent, Fork, Join, Fork-Join,
Out-Tree, In-Tree, Pipeline, Random, Series-Parallel, and Stencil.
We attempted to find an optimal schedule using both 2 and 4 processors,
once each for both versions of AO, giving a total of over 1000 trials.
All tests were run on a Linux machine with 4 Intel Xeon E7-4830 v3
@2.1GHz processors. The tests were single-threaded, so they would
only have gained marginal benefit from the multi-core system. The
tests were allowed a time limit of 2 minutes to complete. For all
tests, the JVM was given a maximum heap size of 96 GB.A new JVM instance
was started for every search, to minimise the possibility of previous
searches influencing the performance of later searches due to garbage
collection and JIT compilation.

Figure \ref{fig:invalid-comparison} shows the results of these tests.
We use a form of plot know as a performance profile: the $x$-axis
shows time elapsed, while the $y$-axis shows the cumulative percentage
of problem instances which were successfully solved by this time.
Both versions of AO were able to solve approximately 70\% of the problem
instances within 2 minutes, with a slight advantage for invalid state
avoidance. This suggests that the presence of invalid states does
not have too much of a negative impact on average. It also suggests,
however, that the addition of the transitive closure and associated
operations does not significantly slow down the implementation of
the AO formulation.

\begin{figure}
\begin{centering}
\includegraphics[width=0.5\columnwidth]{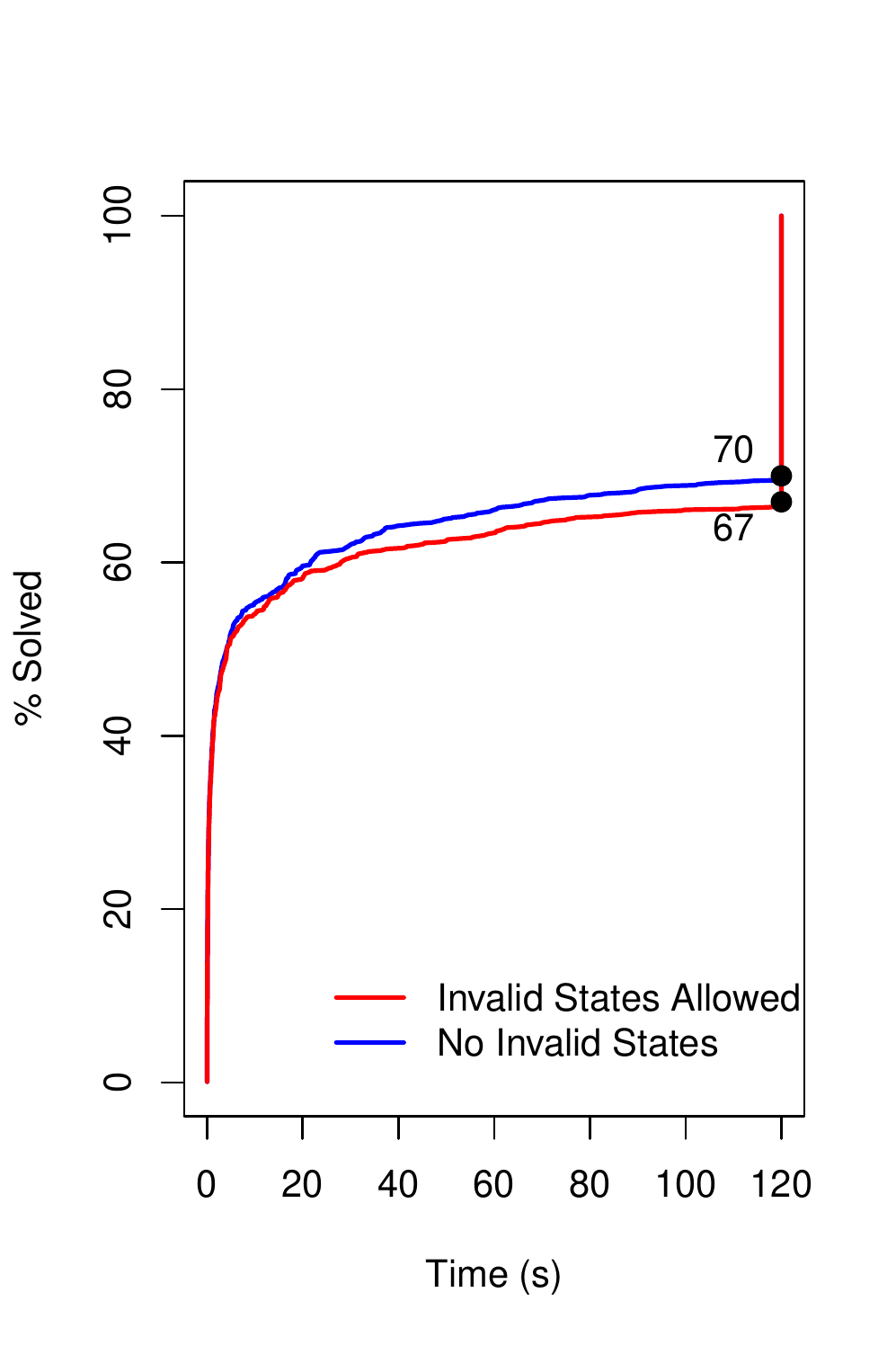}
\par\end{centering}
\caption{Comparing the performance of AO with and without invalid states.\label{fig:invalid-comparison}}
\end{figure}

\section{Pruning Techniques and Optimisations}

\label{sec:pruning}

Now that the novel AO model and the search through its solution space
have been proposed, it is important to investigate pruning techniques
and other optimisations which can be used with it. A search of the
AO state-space model is theoretically able to benefit from several
pruning techniques and optimisations already developed for ELS. Namely,
these are identical task pruning,  fixed order pruning and a heuristic
upper bound \citep{Sinnen2014201,ShaSinAst2010}. We discuss them
in the following, see which have become obsolete and then propose
a new additional pruning technique.

\subsection{Adapted from ELS}

\subsubsection*{Identical Task}

Two tasks $A$ and $B$ are considered identical if they are indistinguishable
from each other in any way except by their name \citep{SinnenElsevier2014}.
This means they have the same weight, same children, same parents,
and same communication costs to and from those respectively. If tasks
are identical, then their positions in any schedule can be freely
swapped without any effect on the rest of the schedule. Therefore,
the relative ordering of a set of such tasks, $I$, does not matter.
We only need to consider one such order for our search. 

To do this in ELS, we augment the task graph by creating a chain of
``virtual edges'' linking tasks in $I$. A virtual edge from task
$A$ to task $B$ prevents $B$ from being considered for scheduling
before $A$, but does not imply any other dependency. It has no weight,
and $A$ does not have to finish before $B$ can be started. Virtual
edges, therefore, only have an impact on deciding which tasks belong
to the current ready list. $B$ cannot be added to the ready list
until $A$ is scheduled. In this way it is ensured that only one order
for the identical tasks in $I$ is allowed.

In AO, this pruning can also be applied, but we need to distinguish
the allocation (A) and the ordering (O) phase. In the allocation phase
we take advantage of identical tasks to provide additional pruning
. Not only does the ordering of identical tasks not matter, a task
$A$ on processor $p_{1}$ can be swapped with its identical task
$B$ on $p_{2}$ without consequence. The processor that an individual
task in a set of identical tasks is allocated to does not matter.
What matters, therefore, is what number of the tasks in $I$ belong
to each part of the allocation. While building our allocation, we
give the parts of the allocation an arbitrary index order. Say that
part $a_{x}\in A$ is the part with highest index to which any task
in $I$ is allocated. When a task $n_{i}\in I$ is next to be allocated,
we restrict the parts to which it may be assigned to only those $a_{i}\in A|i\geq x$.
Essentially, as we allocate the tasks in $I$, we first decide how
many of the identical tasks are assigned to $a_{0}$, then how many
are assigned to $a_{1}$, and so on. In this way, we avoid producing
any allocations which differ only in the permutation of identical
tasks across processors.

During the ordering phase the pruning technique is applied in much
the same way as in ELS. However, since we consider only local ready
lists, the virtual edges are not always relevant. If identical tasks
$A$ and \textbf{$B$ }are allocated to the same processor, then the
virtual edge from $A$ to $B$ will be respected, and only orderings
in which $A$ goes before $B$ will be produced. If $A$ and \textbf{$B$
}are on different processors, however, then the order in which they
are considered for scheduling is instead decided by the order in which
processors are considered. These virtual edges will therefore be ignored. 

\subsubsection*{Heuristic Upper Bound}

A solution to a task scheduling problem can be found quickly (that
is, in polynomial time) using a heuristic algorithm. The length of
an approximate solution can then be used as an upper bound for f-values
in an optimal search. Since we have an example of a solution with
this length, it is guaranteed that no solution with a higher f-value
can be optimal. The A{*} algorithm will not examine states with higher
f-values than the optimal solution, so this optimisation will not
prevent additional states from being created. However, it can save
memory as states that will never need to be examined do not need to
be stored. Additionally, if the heuristic happens to find an optimal
solution, it saves us from searching through an indeterminate number
of equal f-value states. As soon as our best state has that f-value,
we know we can stop and take the heuristic solution - now proven to
be optimal \citep{SinnenElsevier2014,ShaSinAst2010}.

The AO model is able to use this technique in just the same way as
ELS, with no special consideration required.

\subsubsection*{Fixed Task Order}

For a fork graph, it is guaranteed that an optimal schedule exists
in which the tasks are scheduled in order of non-decreasing in-edge
weight \citep{Sinnen2014201}. Given such an order for the tasks,
all that is required to find the optimal schedule is a search for
an optimal allocation of tasks. Conversely, for join graphs, the same
is true if the tasks are scheduled in order of non-increasing out-edge
weight. In a fork-join graph, if an order can be found which satisfies
both the fork and the join condition simultaneously, then this order
is also optimal. We can therefore fix the order of this set of tasks,
eliminating the need to search all permutations. 

In fact in ELS, whenever the current set of ready tasks fulfills these
conditions, their order can be fixed \citep{Sinnen2014201}. This
fixed order can then be followed until changes to the ready list invalidate
the fixed order conditions. This property allows the technique to
be extended to graphs which are not purely independent, a fork, join,
or fork-join, but merely contain these as a sub-structure. If in any
state the current ready tasks all belong to such a sub-structure,
it may be possible to fix their order. It also means that if a fork-join
graph does not immediately meet the fixed order conditions, the conditions
may still be met in subsequent states and allow the order to be fixed
for certain sub-trees.

For AO, if we are able to fix the order of all the tasks in a graph
then only the allocation phase of our search is relevant, and the
ordering phase becomes trivial. Generally, in the ordering phase,
we can apply our fixed order conditions to the tasks in a local ready
list. If the conditions hold for the currently ready tasks on a given
processor, then the local order of these tasks can be fixed. Since
these local ready lists are smaller than the global ready list of
ELS, they are less likely to contain a task which contradicts the
fixed order conditions. Additionally, the number of chances for an
order to be fixed increases by a factor of $|P|$. It is therefore
probable that the order of tasks can be fixed more often when using
AO, albeit for smaller sets of tasks. We can even fix the order of
multiple fork-join substructures in a graph at the same time, as long
as they are assigned to different processors.

\subsection{Obsolete}

Since there are no duplicates in the AO model, the pruning techniques
used to mitigate the impact of duplicates in ELS \citep{Orr2015:dfs}are
no longer relevant.

\subsubsection*{Duplicate Detection}

In ELS, there is no way to mitigate the effect of independent scheduling
order duplicates (Section~\ref{subsec:Exhaustive-List-Scheduling})
except to keep a record of all states found so far. Binary search
trees are used to store both an open set (containing states created
but not yet explored) and a closed set (containing states which have
already been expanded) \citep{ShaSinAst2010}. Each time a state is
created, it must be ensured that neither the open nor the closed set
already contain this state. If they do, then this state is a duplicate
and is discarded. The need for the closed set means that all states
created must be kept in memory for the duration of the search. The
time taken to search the open and closed sets is $O(lg(k))$, where
$k$ is the number of states so far created by the search.

Since there are no duplicates in AO, it is not necessary to maintain
a closed list. It is also unnecessary to compare new states against
the open list, which permits the use of more efficient data structures
for the open list. AO therefore has a very large advantage in terms
of memory usage, and a small but perhaps practically significant advantage
in time complexity.

\subsubsection*{Processor Normalisation}

In order to avoid processor permutation duplicates (Section~\ref{subsec:Exhaustive-List-Scheduling})
in ELS, each state created has its partial schedule $S^{\prime}$
transformed into a normalised form $S_{N}^{\prime}$ \citep{ShaSinAst2010}.
In the ELS implementation compared here, we rename and therefore 're-order'
the processors to which tasks are assigned in $S^{\prime}$, in a
way that ensures that all processor permutation duplicates of $S^{\prime}$
are transformed to the same normalised form. First, we give a total
ordering to the tasks in $V$. This can be any order, so long as it
is used consistently, but is likely to be the same topological order
used elsewhere. We can then define $min(p_{i})$ as the task $n$
with lowest value among those assigned to $p_{i}$. The ordering of
processors is then defined such that $min(p_{i})<min(p_{j})\implies p_{i}<p_{j}$.
Once the processors are re-ordered according to this scheme, we have
our normalised partial schedule $S_{N}^{\prime}$. By normalising
all states in this way, we can use the previously discussed duplicate
detection mechanism in order to remove processor permutation duplicates
from consideration.

The method used by the AO model for its allocation phase makes this
process unnecessary, as it simply does not allow these duplicates
to be produced in the first place. In essence, the method of iteratively
building a partition from an ordered list of tasks ensures that each
state produced is automatically in its normalised form. In other words,
if the processor normalisation process were to be applied to one of
AO's partial allocations no re-ordering would ever take place, as
the processors are in their normalised order at all times.

\subsection{Novel}

\subsubsection*{Graph Reversal}

Among the standard task graph structures, there are several pairs
which differ only by the direction of their edges. Most obviously,
reversing the edges of a fork graph produces a join graph, and vice
versa. It can also be observed that, for both the ELS and AO models,
join graphs are significantly more difficult to solve than fork graphs,
as evident in Figure \ref{fig:graph-type-breakdown}. This arises
from the fact that in both models tasks are allocated to processors
in a topological order. In a fork graph, all communications originate
from the source task. Being a source, and therefore first in topological
order, this task is always the first to be allocated, and from that
point on it will always be known whether a communication cost is incurred
or not as soon as the other corresponding task is allocated. Conversely,
in a join graph, all communications go to the sink task. Since it
is last in topological order, all other tasks are allocated before
it without any knowledge of which communications are incurred. The
decision of which communication costs to set to zero is made all at
once, in the last step. Since knowledge about communication costs
is critical to determining a lower bound on the eventual length of
a partial schedule, it is clear to see why fork graphs, where this
information is always available, can be solved much more efficiently
than join graphs, where this information is not available until too
late.

We define $R(G)$, the reverse of a task graph $G$, simple by reversing
the direction of each edge $e\in G$. If we do this for a join graph,
the result is a corresponding fork graph. We can then find an optimal
schedule $S_{R(G)}^{*}$ for this reversed graph. If this reversed
graph is now a fork, it can be solved much more efficiently than the
original join. Our optimal schedule for $R(G)$ can then be reversed
to obtain a valid schedule for $G$. Reversing a schedule simply means
reversing the ordering of the tasks allocated to each processor, so
that the task scheduled first is now scheduled last, and so on. Using,
this ordering, each task is started as early as possible, subject
to the processor availability and precedence constraints. This reversed
schedule $R(S_{R(G)}^{*})$ is now an optimal schedule $S_{G}^{*}$
for the original graph. This known result is easy to show. Taking
a schedule and reversing its time line, while at the same time reversing
the direction of all edges results in a valid schedule for the reversed
graph. The reversed schedule and the original must have the same length
- the same computation and communication delays occur, only in backwards
order. In the reversed schedule, not all tasks might start at their
earlist possible time, but rescheduling them earlier has no negative
impact on the schedule length (this is done automatically in the above
described procedure as we only take the order of the tasks). If a
shorter schedule for $G$ existed, it could in turn be reversed to
produce a shorter valid schedule for $R(G)$, and so $S_{R(G)}^{*}$
could not have been an optimal schedule in the first place. Instead
of solving difficult join graphs, we can instead transform them into
fork graphs, solve these much more easily, and then transform the
resulting schedules to produce optimal schedules for the original
joins. This technique is also applied to out-tree and in-tree graphs,
of which fork and join are special cases, respectively.

\section{Evaluation}

\label{sec:evaluation}

In this section we evaluate the benefit of the new AO state-space
model in the search for optimal solutions. For this purpose it is
compared against the use of the ELS model. The empirical evaluation
was performed by running branch-and-bound searches on a diverse set
of task graphs using each state-space model. Task graphs were chosen
that differed by the following attributes: graph structure, the number
of tasks, and the communication-to-computation ratio (CCR). Table
\ref{tab:Range-of-task} describes the range of attributes in the
data set. A set of 1360 task graphs with unique combinations of these
attributes were selected. These graphs were divided into four groups
according to the number of tasks they contained: either 10 tasks,
16 tasks, 21 tasks, or 30 tasks. An optimal schedule was attempted
for each task graph using 2, 4, and 8 processors, once each for each
state-space model. This made a total of 4080 problem instances attempted
per model. Searches were performed using the A{*} search algorithm.
All pruning techniques discussed in the previous section were applied
to each state-space model that could take advantage of them.

\setlength{\columnsep}{1.5cm} \setlength{\columnseprule}{0.4pt}

\begin{table}
\begin{multicols}{3}\raggedcolumns 
\begin{centering}
\textbf{Graph Structure}
\par\end{centering}
\begin{itemize}
\item Independent
\item Fork
\item Join
\item Fork-Join
\item Out-Tree
\item In-Tree
\item Pipeline
\item Random
\item Series-Parallel
\end{itemize}
\columnbreak
\begin{centering}
\textbf{No. of Tasks}
\par\end{centering}
\begin{itemize}
\item 10
\item 16
\item 21
\item 30
\end{itemize}
\columnbreak
\begin{centering}
\textbf{CCR}
\par\end{centering}
\begin{itemize}
\begin{singlespace}
\item 0.1
\end{singlespace}
\item 1
\item 10
\end{itemize}
\end{multicols}

\caption{\label{tab:Range-of-task-1}Range of task graphs in the experimental
data set.}
\end{table}
The implementations were built with the Java programming language.
An existing implementation of ELS\citep{Sinnen2014201} was used as
the basis for an AO implementation, with code for common procedures
shared wherever possible. Notably, the basic implementation of the
A{*} search algorithm is shared, with the implementations differing
only by how the children of a search node are created. The implementations
of commonly applicable pruning techniques are also shared. Using this
approach, the differences observed in the experimental results are
most like due to the different models and not implementation artifacts.

All tests were run on a Linux machine with 4 Intel Xeon E7-4830 v3
@2.1GHz processors. The tests were single-threaded, so they would
only have gained marginal benefit from the multi-core system. The
tests were allowed a time limit of 2 minutes to complete. For all
tests, the JVM was given a maximum heap size of 96 GB. A new JVM instance
was started for every search, to minimise the possibility of previous
searches influencing the performance of later searches due to garbage
collection and JIT compilation.

\subsection{Results and discussion}

For the 10-task group, every problem instance was solved within at
most 3 seconds, regardless of the state-space model used. It is apparent
that both models are powerful enough that task graphs this small will
not present a challenge, and so we will not discuss these results
further. The other three groups will be discussed in detail in the
following.. Figure~\ref{fig:overall-results} shows performance profiles
(as used in Section \ref{subsec:Invalid-States-Evaluation}) that
compare the performance of the two models, broken down by graph size.
These charts indicate the accumulated percent of problem instances
in the data set that were successfully solved after a given time had
elapsed, up to the timeout of 120 seconds. In the 16 task group, a
large majority of the problem instances were able to be solved by
both models, but AO gives a clear advantage. By the timeout, 90\%
of instances were solved by AO, while only 77\% were solved by ELS.
In the 21 task group, we see an even more dramatic difference: while
ELS solves only 46\% of instances within two minutes, AO manages to
solve 70\%. Graphs of size 30 are difficult to solve within two minutes
for both models. Although ELS is seen to have a slight advantage in
the first few seconds of runtime, by the end of two minutes the lines
have converged and both models solve 19\% of instances. Overall, AO
has significantly better performance in these experiments, particularly
in the ``medium difficulty'' 21 task group. Not only does it solve
more instances within a few seconds, the gap between the models widens
as time goes on, with AO consistently solving more instances than
ELS. 

\begin{figure}
\begin{centering}
\includegraphics[width=0.75\columnwidth]{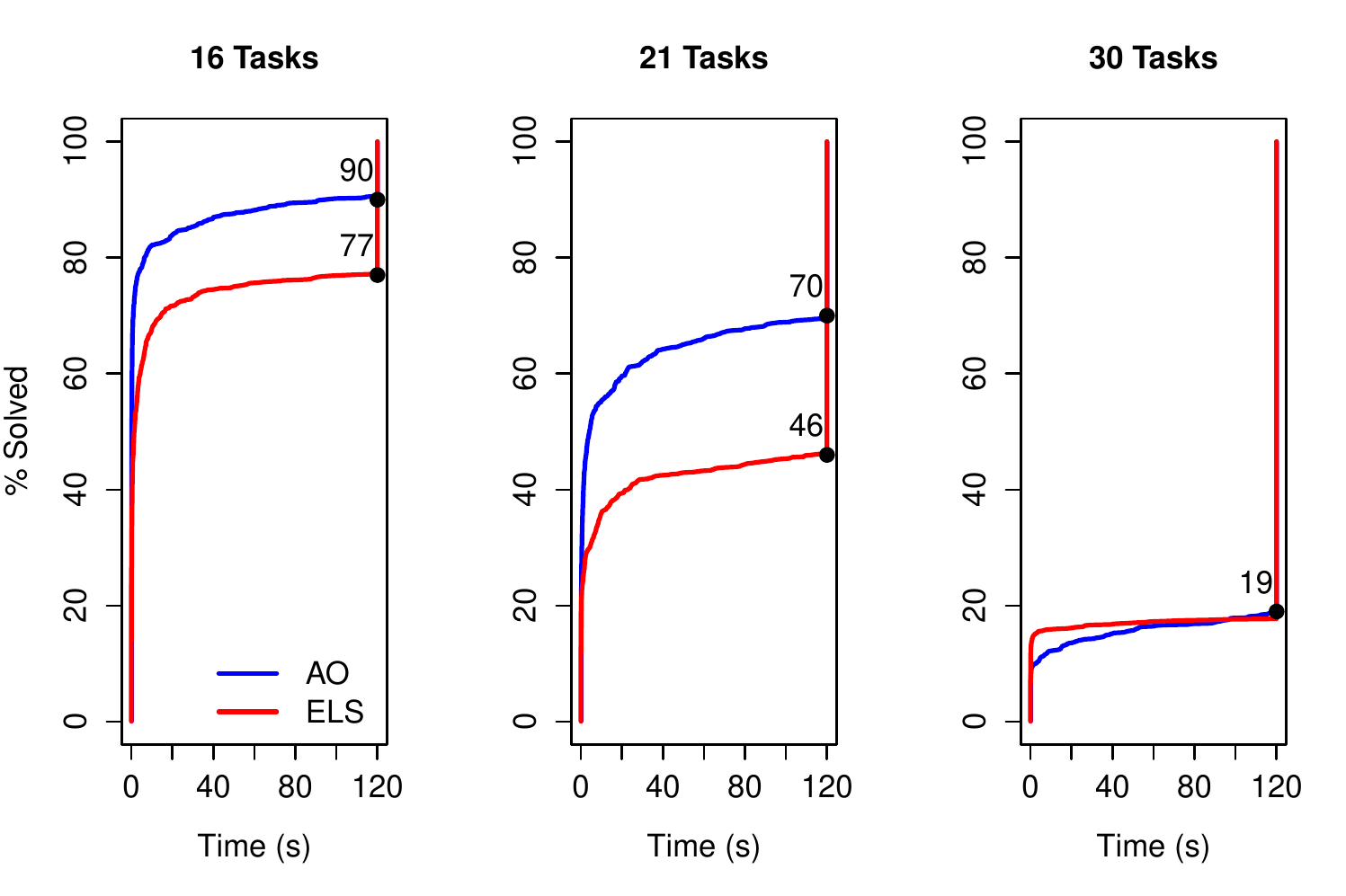}
\par\end{centering}
\caption{Overall performance of the two models.\label{fig:overall-results}}
\end{figure}
Breaking the results down by graph structure, we see that AO has a
clear advantage for most structures. Figure~\ref{fig:Performance-by-structure}
shows the solved instances within the time limit by model in stacked
bar charts across the different structures. In the 16 and 21 task
graphs, AO dominates ELS in almost all structures. For 30 tasks, it
is clear that graphs of this size present a significant challenge
for both models, as a large majority of problem instances were not
solved by either. Overall, both models solved 19\% of graphs in this
group. ELS shows better performance with Independent, Random, and
Stencil graphs, while AO is better for the other structures. The large
number of non-solved instances in this size category makes it difficult
to draw further conclusions. 
\begin{center}
\begin{figure*}[p]
\begin{centering}
\subfloat[16 Tasks]{\begin{centering}
\includegraphics[width=0.75\columnwidth]{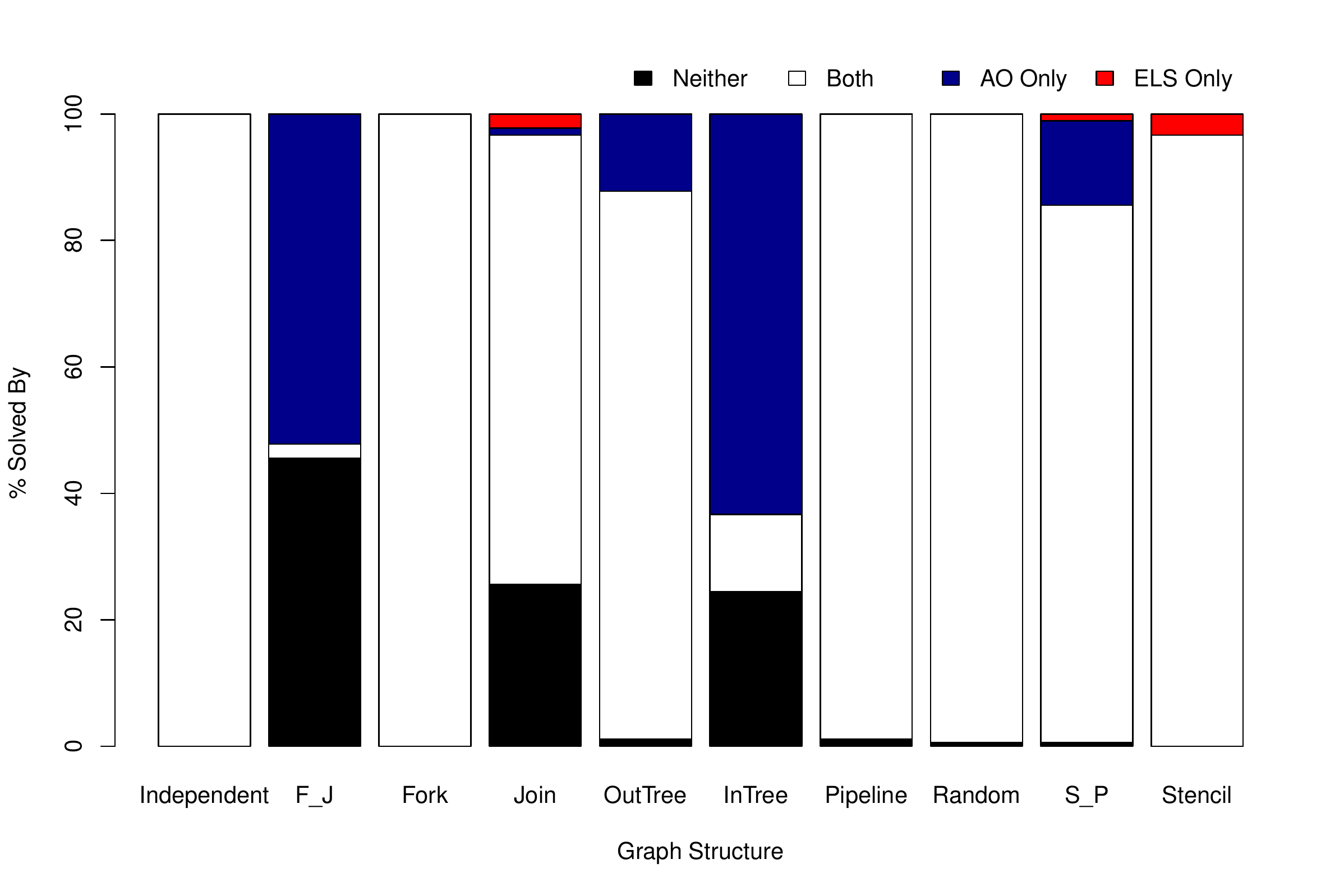}
\par\end{centering}
}
\par\end{centering}
\begin{centering}
\subfloat[21 Tasks]{\begin{centering}
\includegraphics[width=0.75\columnwidth]{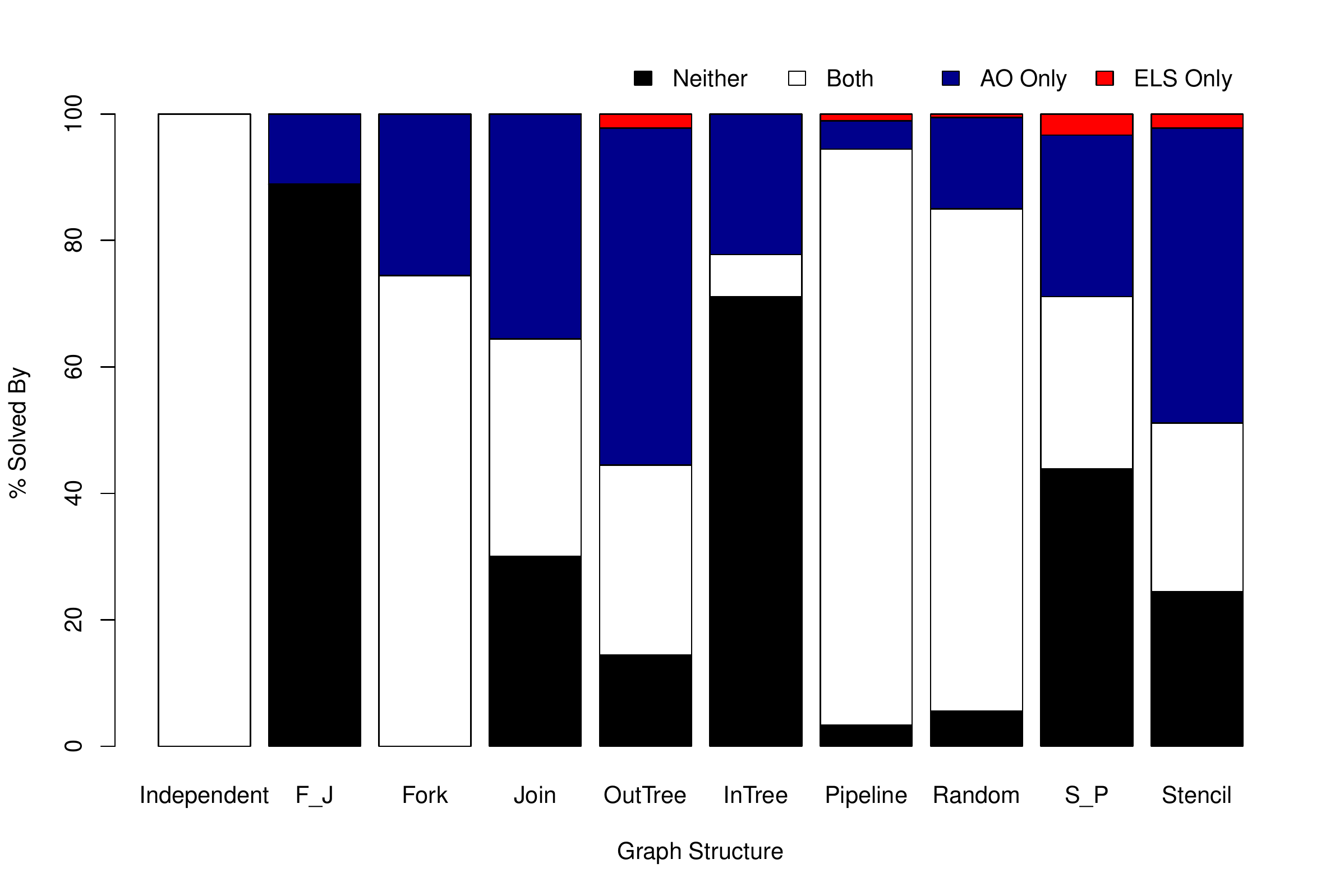}
\par\end{centering}
}\hfill{}\subfloat[30 Tasks]{\begin{centering}
\includegraphics[width=0.75\columnwidth]{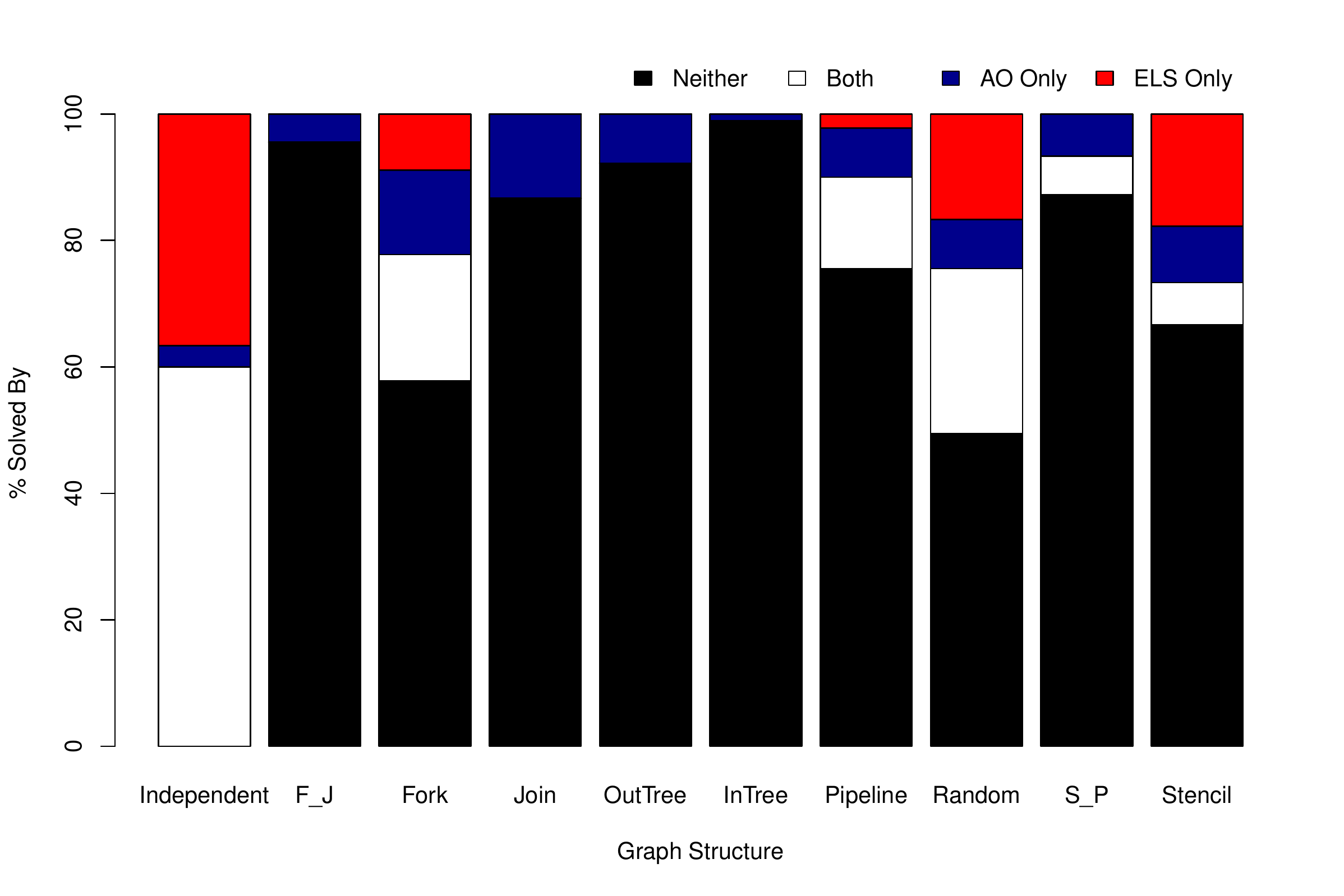}
\par\end{centering}
}
\par\end{centering}
\caption{\label{fig:Performance-by-structure}Performance of the models broken
down by graph structure.}
\label{fig:graph-type-breakdown}
\end{figure*}
\par\end{center}

Comparing by the communication-to-computation ratio of the task graphs
(Figure~\ref{fig:CCR-results-1}), we see that AO has an advantage
at all values, but is dramatically better at solving graphs with very
high CCR of 10. By deciding the allocation of tasks first, a search
using the AO model very quickly determines the entire set of communication
costs which will be incurred. Allocations which incur very large communication
costs are likely to be quickly ruled out, and knowledge of all the
communication costs can be used in the calculation of $f$-values
throughout the ordering stage. For graphs in which communication is
dominant, it is intuitive that early knowledge of the communication
would allow more efficient decision-making, and these results support
that intuition.

\begin{figure}
\begin{centering}
\includegraphics[width=0.75\columnwidth]{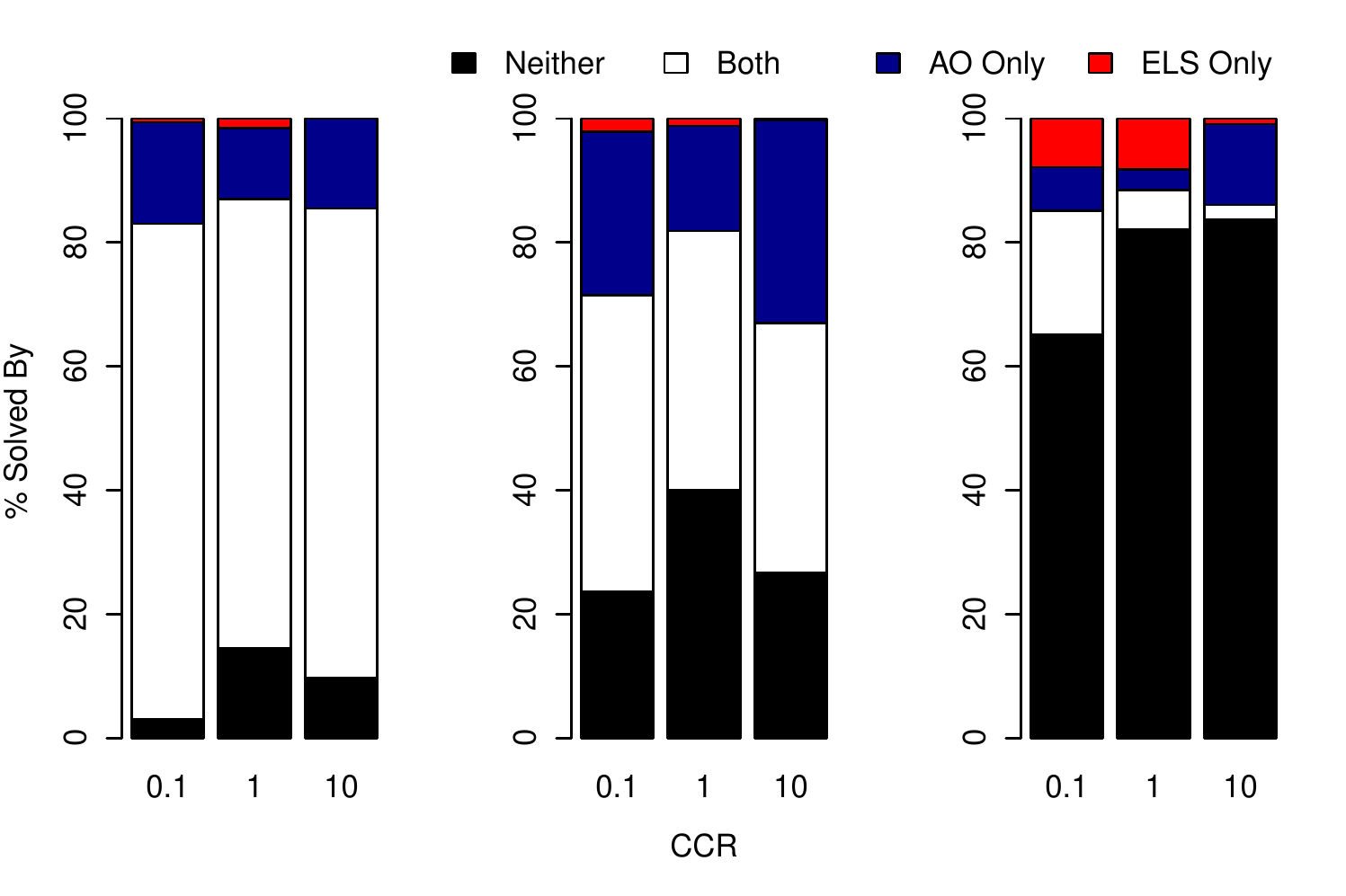}
\par\end{centering}
\caption{\label{fig:CCR-results-1}Performance of the models broken down by
CCR.}
\end{figure}
The newly implemented reversed-join pruning technique allowed for
a dramatic increase in the number of join and in-tree task graphs
able to be solved, as seen in Figure \ref{fig:reverse-structure},
where the '-R' graph name extension indicates that the graph was reversed
before scheduling. As would be expected, the performance trend for
reversed join graphs is very similar to that for fork graphs, and
the same is true for reversed in-tree and out-tree graphs. Both AO
and ELS are able to solve many more graphs this way, but the impact
is more significant for AO. 

\begin{figure}
\begin{centering}
\includegraphics[width=0.75\columnwidth]{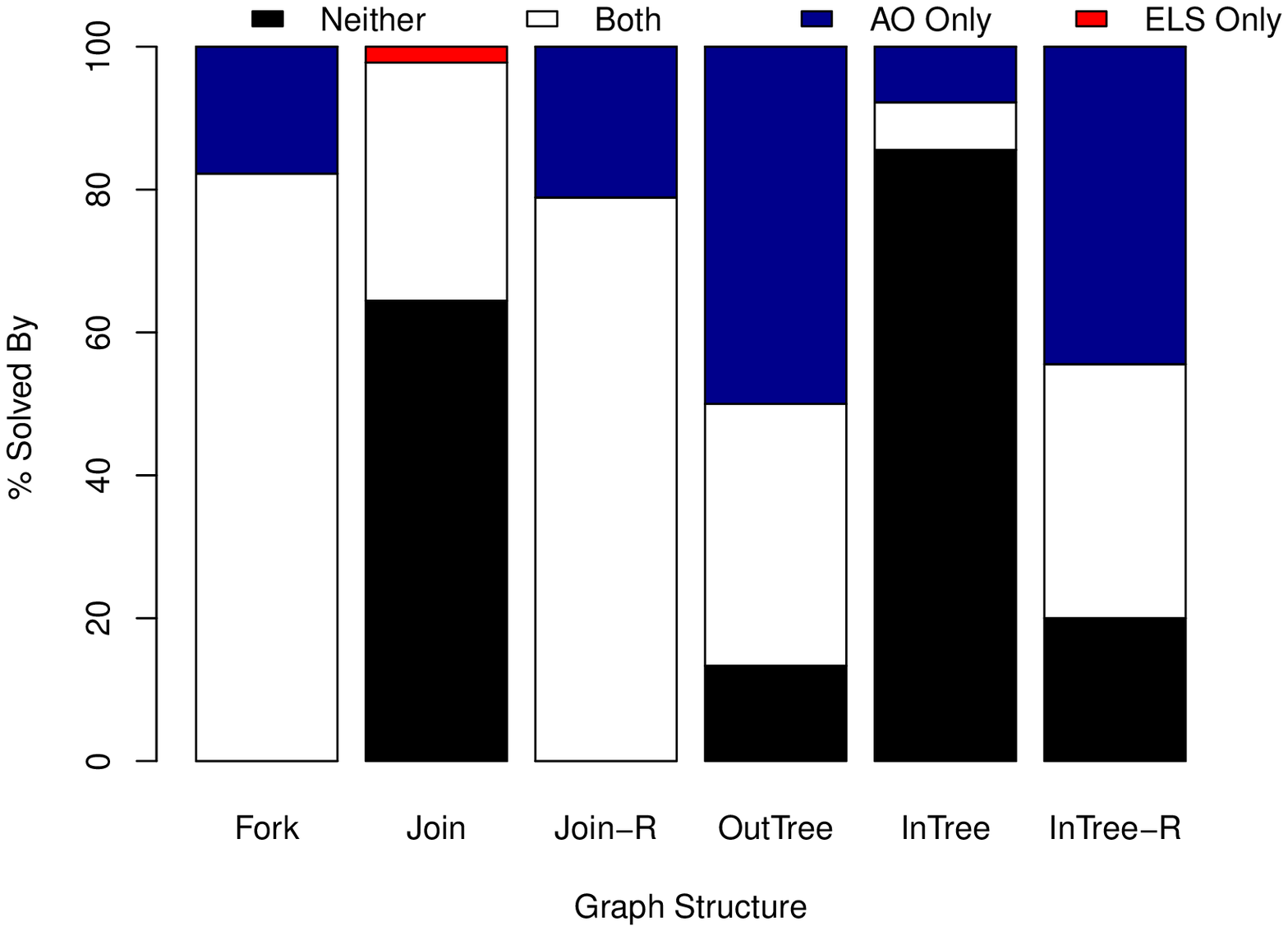}
\par\end{centering}
\caption{The effect of reversing Join and In-Tree task graphs.}
\label{fig:reverse-structure}
\end{figure}

\section{Advanced Lower Bound Heuristics\label{sec:Lower-Bound-Heuristics}}

Section \ref{sec:theory} described intuitive and essential bounds
and corresponding $f$-value functions to use for the initial implementation
of the AO state space. However, it is likely that tighter bounds could
improve the performance of searches using AO. Below we describe various
improvements which allow these bounds to be tightened in some circumstances.
These new bounds all apply to the Allocation phase of the state-space.
Since this is the part of the model which is most distinct from ELS,
it appears to offer the most new opportunities for heuristic development.

\subsection*{Minimum Finish Time}

The allocation-load heuristic (Section~\ref{subsec:Allocation-Cost-Function},$f_{load}$\eqref{eq:fload-bl})
was intended to provide an estimate for the finish time of each processor,
using the trivial fact that a processor will require at least as much
time as is necessary to execute all of its allocated tasks. The addition
of the minimum allocated top and bottom levels was a simple step which
acknowledged that processors must sometimes wait for data to be communicated
to them before they can begin execution. However, it is also often
the case that a processor cannot execute all of its tasks in one unbroken
stretch, and additional idle time must occur. Idle time is time during
the run of a schedule when a processor is not performing any computation.
This is often necessitated when a processor has finished execution
of one task but has not yet received all data necessary to begin the
next task in order. It is required to wait while computation and communication
associated with that task's ancestors is performed by other components
of the system. The time at which a processor receives all the necessary
data for a task $n$ and may begin execution is known as the data-ready
time, $drt(n)$. In any schedule, for a given task $n$, it must always
be the case that $t_{s}(n)\geq drt(n)\geq tl_{\alpha}(n)$. It follows
from this that we may obtain a tighter bound by considering the allocated
top levels of all tasks on a processor, thereby including some additional
idle time.

Consider a single grouping $a$ in a partial partition $A$. When
scheduled, each task $n\in a$ will have some finish time $t_{f}(n)=t_{s}(n)+w(n)$.
We wish to find a lower bound for the value $T_{F}(a)=max_{n\in a}\{t_{f}(n)\}$.
A simple observation is that $T_{F}(a)\geq max_{n\in a}\{tl_{\alpha}(n)+w(n)\}$:
the minimum finish time must be at least as large as the maximum of
the earliest possible finish times of the tasks. However, in many
cases a task $n$ will not be able to start at $tl_{\alpha}(n)$ without
overlapping execution of tasks and violating the processor constraint.
Execution of the final task must be delayed until all other tasks
are completed. 

Given a total ordering $O_{a}$ for the tasks $n\in a$, let task
$n_{i}$ be the i\textsuperscript{th} in order according to $O_{a}$.
The earliest possible starting time for task $n_{i}$ adhering to
order $O_{a}$ is then $t_{est}(n_{i},O_{a})=max\{tl_{\alpha}(n_{i}),t_{est}(n_{i-1},O_{a})+w(n_{i-1})\}$,
with $t_{est}(n_{0},O_{a})=tl_{\alpha}(n_{0})$. It follows for the
order $O_{a}$ that $T_{F}(a,O_{a})\geq max_{n\in a}\{t_{est}(n,O_{a})+w(n)\}.$

To use this in a lower bound, we now need to find an ordering which
minimises $T_{F}(a)$.
\begin{lem}
Let $O_{a}^{tl_{\alpha}}$ be the ordering of the tasks $n\in a$
in non-descending allocated top level order $tl_{\alpha}$. Then $T_{F}(a,O_{a}^{tl_{\alpha}})\leq T_{F}(a,O_{a})$
for all possible orders $O_{a}$ of tasks in $a$. 
\end{lem}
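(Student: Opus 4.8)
The plan is to prove this by a standard adjacent-transposition exchange argument, recognising the single-grouping sub-problem as an instance of $1|r_j|C_{\max}$ (one machine, release dates $tl_{\alpha}$, no preemption), for which the earliest-release-date ordering is optimal. First I would reduce to removing a single inversion: if $O_a$ is not already in non-descending $tl_{\alpha}$ order, there is an adjacent pair of tasks $n_k,n_{k+1}$ in $O_a$ with $tl_{\alpha}(n_k)>tl_{\alpha}(n_{k+1})$. Let $O_a'$ be $O_a$ with these two swapped. The claim to establish is $T_{F}(a,O_a')\le T_{F}(a,O_a)$; since any ordering can be brought to $O_a^{tl_{\alpha}}$ by finitely many such swaps (bubble sort, each swap strictly decreasing the number of inverted adjacent pairs), iterating the claim yields the lemma.

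To prove the claim, let $\tau$ denote the machine-free time just before position $k$, i.e. $\tau=t_{est}(n_{k-1},O_a)+w(n_{k-1})$ when there is a preceding task, and $\tau=-\infty$ otherwise; note $\tau$, and the $t_{est}$ values and finish times of all tasks in positions before $k$, are identical in $O_a$ and $O_a'$ because those positions are untouched. Abbreviate $a_1=tl_{\alpha}(n_k)$, $b_1=tl_{\alpha}(n_{k+1})$, $p=w(n_k)$, $q=w(n_{k+1})$, with $a_1>b_1$ and $q\ge 0$. Unfolding the recursion for $t_{est}$ and using $t_{est}(n_{k+1},O_a)\ge t_{est}(n_k,O_a)+p$, the larger of the two finish times of the pair in $O_a$ is $C=\max\{\tau,a_1\}+p+q=\max\{\tau+p+q,\ a_1+p+q\}$ (the competing branch $b_1+q$ is dominated since $\max\{\tau,a_1\}+p\ge a_1>b_1$), while in $O_a'$ it is $C'=\max\{\tau+p+q,\ b_1+p+q,\ a_1+p\}$. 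Each term of $C'$ is at most $C$: the first equals a term of $C$; the second satisfies $b_1+p+q<a_1+p+q$ since $a_1>b_1$; and the third satisfies $a_1+p\le a_1+p+q$ since $q\ge 0$. Hence $C'\le C$ (and $C'=C$ in the tied case $a_1=b_1$, which is why ``non-descending'' suffices).

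Finally I would propagate this local inequality to the whole ordering. Tasks before position $k$ keep their finish times; the maximum finish time over the swapped pair equals $C$ in $O_a$ and $C'$ in $O_a'$ with $C'\le C$; and every task after position $k+1$ has a $t_{est}$ value that is a non-decreasing function of the machine-free time inherited from the pair (that time being $C$ versus $C'$), so, inductively, none of those finish times increases. Taking the maximum over all $n\in a$ gives $T_{F}(a,O_a')\le T_{F}(a,O_a)$, which completes the induction. I expect the middle step to be the main obstacle — carefully unfolding the nested maxima to identify $C$ and $C'$ and verifying $C'\le C$ from only $a_1>b_1$ and $q\ge 0$; the prefix/suffix monotonicity and the termination of the bubble-sort reduction are routine.
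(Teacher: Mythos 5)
Your proposal is correct and is essentially the same argument as the paper's: an adjacent-transposition exchange showing that swapping an inverted pair (larger $tl_{\alpha}$ before smaller) never increases the pair's latest finish time nor the start times of subsequent tasks, then sorting by repeated swaps. The only differences are cosmetic — the paper phrases it as a contradiction and asserts the swap leaves $T_F$ unchanged via a ``no idle gap'' observation, whereas you argue directly and unfold the nested maxima, which is in fact slightly more careful (the swap can strictly decrease the pair's finish time, so ``does not increase'' is the right claim).
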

\begin{proof}
The proof is by contradiction. Assume there is an ordering$O_{a}^{*}\ne O_{a}^{tl_{\alpha}}$
with $T_{F}(a,O_{a}^{tl_{\alpha}})>T_{F}(a,O_{a}^{*})$. Now consider
two adjacent tasks $n_{j-1}$ and $n_{j}$, $1\leq j\leq|a|-1$, in
the ordering $O_{a}^{*}$. If $tl_{\alpha}(n_{j})\geq tl_{\alpha}(n_{j-1})$,
they are already in the same relative order as in $O_{a}^{tl_{\alpha}}$.
Otherwise, their order can be swapped without increasing the earliest
start time of the following tasks, $t_{est}(n_{i}),\forall i>j$.
This is true because $tl_{\alpha}(n_{j})<tl_{\alpha}(n_{j-1}),$and
so there is no gap (idle time) between the two tasks. This means they
can be swapped and placed again without any gap, such that the now
first task $n_{j}$ starts at the original $t_{est}(n_{j-1}O_{a}^{*})$.
Following this, the now later task $n_{j-1}$ will finish at the original
$t_{f}(n_{j},O_{a}^{*})$, and therefore the swap does not impact
on the following tasks in the order. Repeatedly swapping all task
pairs that are not in order until there is no such pair left, will
then bring all tasks of $a$ into $O_{a}^{tl_{\alpha}}$ order with
the same $T_{F}(a,O_{a}^{*})$ as before, which is a contradiction
to the original assumption.
\end{proof}
We have shown that by ordering the tasks $n\in a$ by non-descending
allocated top level, we can obtain a lower bound for $T_{F}(a)$,
which we call $T_{F}^{*}(a)$. Adding the minimum bottom level to
this, as explained in the previous section, achieves an even tighter
lower bound for the overall schedule. 

Additionally, an analogous bound $T_{S}^{*}(a)$ can be found by arranging
the tasks in an order $O_{a}^{bl}$, such that $bl(n_{i})-w(n_{i})\geq bl(n_{i+1})-w(n_{i+1})$,
and assigning them as-late-as-possible start times. Note that this
is equivalent to the process of finding $T_{F}^{*}(a)$ on a reversed
task graph. Where $T_{F}^{*}(a)$ is a lower bound on the time between
the start of the overall schedule and the end of execution of the
last task on the processor, $T_{S}^{*}(a)$ is a lower bound on the
time between the start of execution of the first task on the processor
and the end of the overall schedule. This can of course be combined
with the minimum allocated top level in just the same way as $T_{F}^{*}(a)$
is with the minimum bottom level. Finally, we can combine all of this
to obtain our new overall bound:

\begin{align}
f_{\mathrm{load-mft}}(s)=\mathrm{max}{}_{a\in A} & \biggl\{ max\{T_{F}^{*}(a)+min_{n\in a}(bl_{\alpha}(n)-w(n)),\nonumber \\
 & min_{n\in a}tl_{\alpha}(n)+T_{S}^{*}(a)\biggr\}\label{eq:fload-mft}
\end{align}

\subsection*{Critical Path Load}

The allocated critical path heuristic (Section~\ref{subsec:Allocation-Cost-Function},
$f_{acp}$\ref{eq:facp}) can be improved by closer examination of
the reasons for using top and bottom levels. The top level $tl(n)$
gives us a lower bound on the time before a task $n$ can start. The
bottom level $bl(n)$ gives us a lower bound on the time between the
start of task $n$ and the overall finish time of the schedule. These
are determined by finding the \textquotedbl critical paths\textquotedbl{}
in the task graph that begin and end with that task, respectively.
However, as shown by use of the allocation-load heuristic, combining
a critical path with a load balancing heuristic gives us a tighter
bound. To apply this to top and bottom levels, we can examine the
ideal load balancing of the tasks preceding and following our task
$n$. We call these the top load, $tload(n)$, and bottom load, $bload(n)$.
We use a well-known simple load-balancing bound found by summing the
weights of all relevant tasks and dividing by the total number of
processors. For the top load, this means the sum of the weights of
all ancestors of our task: since they all must finish execution before
our task starts, a perfect load balancing of these tasks gives a lower
bound for the start of our task. Therefore, $tload(n)=\sum_{i\in ancestors(n)}w(i)/|P|$.
Similarly, the bottom load uses the sum of the weights of all descendants
of our task: since they all must start execution after our task finishes,
a perfect load balancing gives a lower bound for the time required
to finish the schedule, and so $bload(n)=\sum_{i\in descendants(n)}w(i)/|P|$.
The top and bottom load values are not affected by allocation, and
therefore need only be calculated once at the beginning of the search
process. When determining the allocated critical path, we can use
whichever is the maximum of the top level or top load of each task
(and similarly, bottom level or bottom load) to find a tighter overall
bound: 

\begin{align}
f_{\mathrm{acp-load}}(s)=\mathrm{max}_{n\in V^{\prime}} & \biggl\{ max(tl_{\mathrm{a}}(n),tload(n))\nonumber \\
 & +max(bl_{a}(n),bload(n))\biggr\}\label{eq:facp-load}
\end{align}

\subsection{Evaluation}

To determine experimentally the impact of these new lower bound heuristics,
we performed A{*} searches on a set of task graphs using various heuristic
profiles, detailed in Table \ref{tab:Heuristic-profiles}. To clearly
see the effect of these novel lower bounds, we keep the A{*} search
algorithm fixed, except for the changes in the $f$-value calculation
using the proposed bounds. In each trial, the number of states created
by the A{*} search before reaching an optimal solution was recorded.
The same set of task graphs was used as in Section~\ref{sec:evaluation}.
We attempted to find an optimal schedule using 4 processors, once
each for each heuristic profile, giving a total of over 1000 trials.
As before, the algorithms were implemented in the Java programming
language. All tests were run on a Linux machine with 4 Intel Xeon
E7-4830 v3 @2.1GHz processors. The tests were single-threaded, so
they would only have gained marginal benefit from the multi-core system.
The tests were allowed a time limit of 2 minutes to complete. For
all tests, the JVM was given a maximum heap size of 96 GB.A new JVM
instance was started for every search, to minimise the possibility
of previous searches influencing the performance of later searches
due to garbage collection and JIT compilation.

\begin{table}
\begin{tabular}{c|c}
Heuristic Profile  & Description\tabularnewline
\hline 
Baseline & The previously used heuristics, described by Eqs. \eqref{eq:fload-bl}
and \eqref{eq:facp}.\tabularnewline
CritPathLoad & $f_{acp}$\eqref{eq:facp} is replaced by $f_{acp-load}$\eqref{eq:facp-load}.\tabularnewline
MinFinishTime & $f_{load}$\eqref{eq:fload-bl} is replaced by $f_{load-idle}$\eqref{eq:fload-mft}.\tabularnewline
\end{tabular}

\caption{\label{tab:Heuristic-profiles}Heuristic profiles used for experimental
trials.}
\end{table}
Figure \ref{fig:heuristic_profiles} shows the results of these tests
as performance profiles for the three heuristics. It is obvious that
the CPL heuristic produces no difference from the baseline, with both
solving 68\% of instances overall. This suggests that the top and
bottom load metrics rarely produce a critical difference when compared
to top and bottom level. However, MFT does show an advantage over
the baseline, with a total of 71\% of instances solved. The inclusion
of necessary idle time in the processor load heuristic is significant
enough to produce a difference of 3\% of total graphs solved. Further,
that advantage is achieved very early in the A{*} search, which makes
this even more useful. This noticeable improvement of the AO model
based A{*} search demonstrates the further potential that this new
AO model may show as better $f$ functions are discovered.

\begin{figure}
\begin{centering}
\includegraphics[width=0.5\columnwidth]{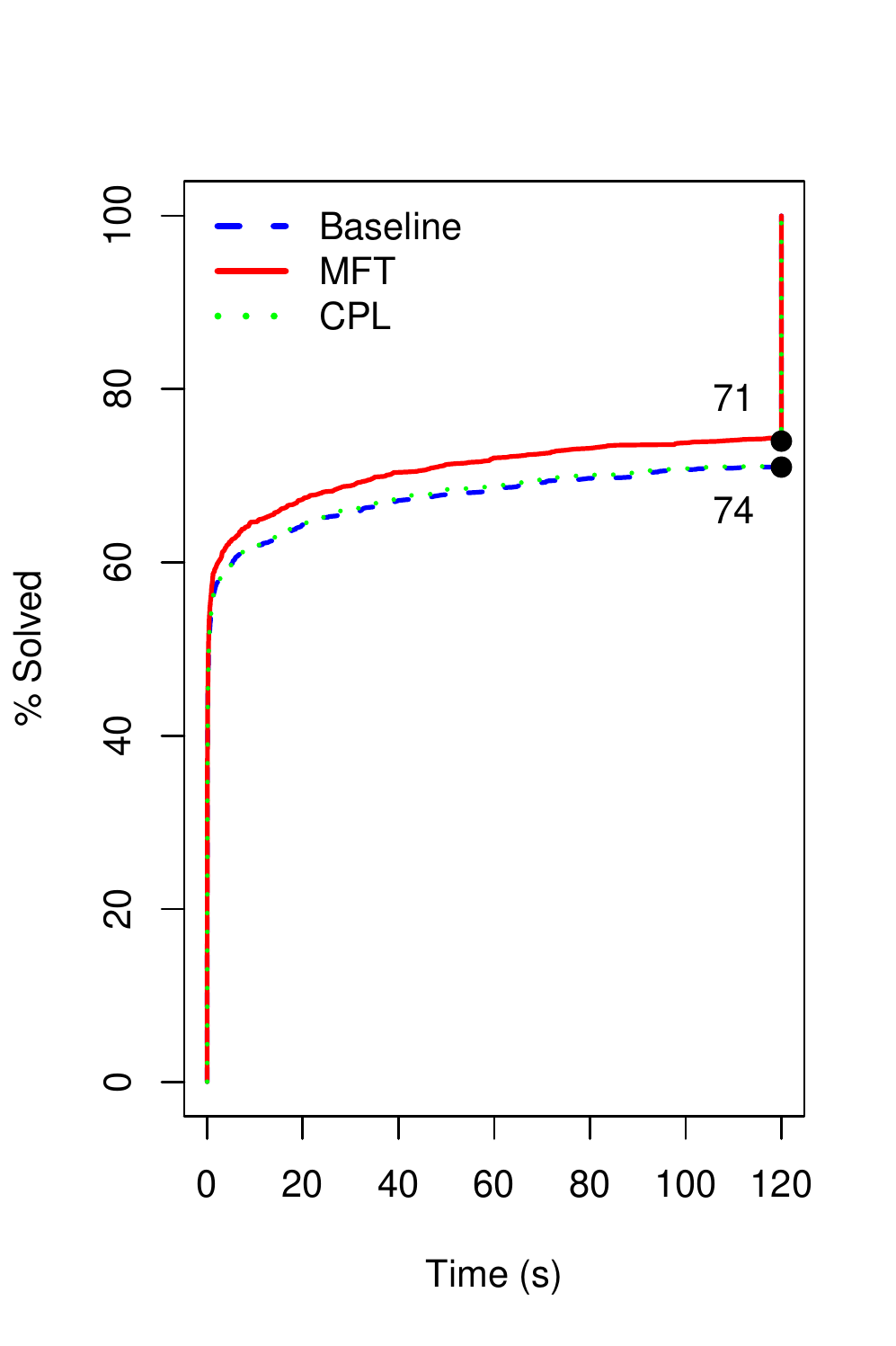}
\par\end{centering}
\centering{}\caption{Comparing percent of instances solved with different heuristic profiles.\label{fig:heuristic_profiles}}
\end{figure}

\section{Conclusions}

\label{sec:conclusion}

Previous attempts at optimal task scheduling through branch-and-bound
methods have used a state-space model which we refer to as exhaustive
list scheduling. This state-space model is limited by its high potential
for producing duplicate states. In this paper, we have proposed a
new state-space model which approaches the problem of task scheduling
with communication delays in two distinct phases: allocation, and
ordering. In the allocation phase, we assign each task to a processor
by searching through all possible groupings of tasks. In the ordering
phase, with an allocation already decided, we assign a start time
to each task by investigating each possible ordering of the tasks
on their processors. Using a priority ordering on processors, and
thereby fixing the sequence in which independent tasks must be scheduled,
we are able to avoid the production of any duplicate states. 

Our evaluation suggests that the AO model allows for superior performance
in the majority of problem instances, as searches using the new model
were significantly more likely to reach a successful conclusion within
a two minute time limit. Along with pruning techniques and optimisations
previously used with ELS, a new graph reversal optimisation leads
to much higher success rates with Join and In-Tree graph structures.
Finally, the more complex Minimum Finish Time heuristic for the allocation
phase of AO was found to produce a significant advantage over a simpler
heuristic.  In this work, the majority of pruning techniques for ELS
have either been adapted to AO, or made obsolete by its lack of duplicates.
However, entirely new pruning techniques which are only possible under
the AO model are likely to be able to be found. Application of these
new techniques may allow AO to further improve.

The AO state-space model's lack of duplicates means that branch-and-bound
algorithms searching it do not require the use of additional data
structures for duplicate detection. This suggests that AO may have
an advantage over ELS when using a low-memory algorithm such as depth-first
branch-and-bound. Parallel branch-and-bound is also likely to benefit
from the AO model, as the lack of duplicate detection means a decreased
need for synchronisation.

\bibliographystyle{elsarticle-num}
\bibliography{journalref}
 
\end{document}